\theoremstyle{plain}
\newtheorem{theorem}{Theorem}[section]
\newtheorem{lemma}[theorem]{Lemma}
\newtheorem{proposition}[theorem]{Proposition}
\theoremstyle{definition}
\newtheorem{definition}[theorem]{Definition}
\theoremstyle{remark}
\newtheorem{remark}[theorem]{Remark}
\numberwithin{equation}{section}
\begin{document}
\title{The Lax integrability of a two-component hierarchy of the Burgers
type dynamical systems within asymptotic and differential-algebraic
approaches}
\author{Denis L. Blackmore}
\address{The Department of Mathematical Sciences at the NJIT, Newark, USA}
\email{denblac@gmail.com}
\author{Anatolij K. Prykarpatski}
\address{The Department of Applied Mathematics at AGH University of Science
and Technology of Krakow, Poland, and\ the Ivan Franko State Pedagogical
University of Drohobych, Lviv region, Ukraine}
\author{Emin \"{O}z\c{c}a\u{g}}
\address{The Department of Mathematics at the Hacettepe University of
Ankara, Turkey}
\email{sultan\_kamal@hotmail.com, ozcag1@hacettepe.edu.tr}
\author{Kamal Soltanov}
\address{The Department of Mathematics at the Hacettepe University of
Ankara, Turkey}
\email{sultan\_kamal@hotmail.com, ozcag1@hacettepe.edu.tr}
\email{pryk.anat@ua.fm, prykanat@cybergal.com}
\subjclass{35A30, 35G25, 35N10, 37K35, 58J70,58J72, 34A34 }
\keywords{Burgers type dynamical system, differential-algebraic approach,
asymptotic analysis, conserved quantities, Lax type integrability, recursion
operator}
\dedicatory{Authors dedicate with honor  this work to \ their colleague and
friend, a brilliant ukrainian mathematician Professor Anatoliy M. Samoilenko
in occasion of his 75-birthday Jubille.}
\maketitle

\begin{abstract}
The Lax type integrability of a two-component polynomial Burgers type
dynamical system within a differential-algebraic approach is studied, its \
linear adjoint matrix Lax representation is constructed. A related recursion
operator and infinite hierarchy of Lax integrable nonlinear dynamical
systems of the Burgers-Korteweg-de Vries type are derived by means of the
gradient-holonomic technique, \ the corresponding Lax type representations
are presented.
\end{abstract}

\section{\protect\bigskip \protect\bigskip Introduction}

Recently a great deal of research articles \cite{TW,IS,MSY,MSY1,SW} were
devoted to classification of polynomial integrable dynamical systems on
smooth functional manifolds. In particular, in the articles \cite{TW} there
was presented a wide enough list of two-component polynomial dynamical
systems of Burgers and Korteweg-de Vries type, which either reduce by means
of some, in general nonlocal, change of variables to the respectively \
separable triangle Lax type integrable forms or transform to the completely
linearizable flows. Amongst these systems the authors of \cite{TW} singled
out the following two-component Burgers type dynamical system 
\begin{equation}
\left. 
\begin{array}{c}
u_{t}=\ u_{xx}+2uu_{x}+v_{x} \\ 
v_{t}=u_{x}v+uv_{x}%
\end{array}%
\right\} :=K[u,v]  \label{B1.1}
\end{equation}%
on a smooth of the Schwartz type functional manifold $M\subset C^{\infty }(%
\mathbb{R};\mathbb{R}^{2}),$ where $(u,v)^{\intercal }\in M,$ the subscripts$%
"x"$ and $"t"\ $ denote, respectively, the partial derivatives with respect
to the variables $x\in \mathbb{R}$ and $t\in \mathbb{R}_{+},\ $the latter
being the evolution parameter. Within this work we will \textit{a priori}
assume that the dynamical system \ (\ref{B1.1}) possesses smooth enough
solutions for an evolution parameter $t\in \mathbb{R}_{+},$  as it follows
from the standard functional-analytic compactness principle considerations
from \cite{Li}.

It is mentioned in \cite{TW} (p. 7706) that the Burgers type dynamical
system \ (\ref{B1.1})\ was before extensively studied in \cite{Fo,Ma}, where
"... the symmetry integrability of \ (\ref{B1.1}) as well the existence of a
recursion operator has already been demonstrated..." for it. The dynamical
system \ (\ref{B1.1}) appears to have interesting applications, as its
long-wave limit reduces to the well-known Leroux system \cite{FrTr,Wh},
describing dynamical processes in two-component hydro- and lattice gas
dynamics. As there is claimed in \cite{TW} (p. 7726), by now the
integrability of \ (\ref{B1.1}) remains still unproven, and having found no
new result devoted to this problem available in literature, we undertaken
this challenge to close it by means of the gradient-holonomic \cite{PM,BPS},
linear adjoint mapping \cite{PrY} approaches \ and recently devised \cite%
{PAPP,PAPP1} differential algebraic integrability testing tools. As a
general result we have proved the following theorem.

\begin{theorem}
\label{Tm_B1.1} \label{Tm_B3.2 copy(1)} The two-component polynomial Burgers
type dynamical system \ (\ref{B1.1}) possesses only two local conserved
quantities $\int dxu$ and $\int dxv$ and no other infinite affine ordered
local conserved quantities. Moreover, on the functional manifold $M$ the
Burgers type dynamical system \ (\ref{B1.1}) is linearizable by means of a
Hopf-Cole type transformation and a dual adjoint mapping to the matrix Lax
type representation 
\begin{equation}
D_{x}\left( 
\begin{array}{c}
f \\ 
\hat{f}%
\end{array}%
\right) =\left( 
\begin{array}{cc}
(u+\lambda ^{-1}v-\lambda )/2 & 0 \\ 
1 & (\lambda -\lambda ^{-1}v-u\ )/2]%
\end{array}%
\right) \left( 
\begin{array}{c}
f \\ 
\hat{f}%
\end{array}%
\right) ,\text{\ \ \ }  \label{B1.2a}
\end{equation}%
and%
\begin{equation}
D_{t}\left( 
\begin{array}{c}
f \\ 
\hat{f}%
\end{array}%
\right) =\left( 
\begin{array}{cc}
\begin{array}{c}
D_{x}(u+\lambda )]/2+ \\ 
+(u+\lambda )(u+\lambda ^{-1}v-\lambda )/2%
\end{array}
& 0 \\ 
(u+\lambda ) & 
\begin{array}{c}
-D_{x}(u+\lambda )]/2+ \\ 
+(u+\lambda )(\lambda -u-\lambda ^{-1}v)/2%
\end{array}%
\end{array}%
\right) \left( 
\begin{array}{c}
f \\ 
\hat{f}%
\end{array}%
\right) ,  \label{B1.2b}
\end{equation}%
compatible for all $\lambda \in \mathbb{C}\backslash \{0\}$ with $(f,\ \hat{f%
})^{\intercal }\in \Lambda ^{0}(\mathcal{\bar{K}\{}u,v;D_{x}^{-1}\sigma
|N\})^{2},\ $\ where $\mathcal{\bar{K}\{}u,v;D_{x}^{-1}\sigma |N\}$ denotes
some  \cite{POS}  finitely extended differential ring $\mathcal{\bar{K}\{}%
u,v\}.$ The related with the Lax operator \ (\ref{B1.2a}) infinite hierarchy
of generalized Burgers type dynamical systems allows the following compact
representation:%
\begin{equation}
D_{t_{n}}(u+\lambda ^{-1}v-\lambda )=D_{x}[D_{x}^{\ \ }\alpha _{n}(x;\lambda
)\ \ +\ (u+\lambda ^{-1}v-\lambda )\alpha _{n}(x;\lambda )],  \label{B1.2c}
\end{equation}%
where the evolution parameters $t_{n}\in \mathbb{R}_{+}$ and, by definition, 
$\ $%
\begin{equation}
\alpha _{n}(x;\lambda ):=(\lambda ^{n}\alpha _{n}(x;\lambda ))_{+}
\label{B1.2d}
\end{equation}%
for all natural $n\in \mathbb{N}$ is the corresponding nonnegative degree
polynomial part generated by the asymptotic local functional solution $\
\alpha _{n}(x;\lambda )\sim \sum_{j\in \mathbb{Z}_{+}}\lambda ^{-j}\alpha
_{j}[u,v]$ as $|\lambda |\rightarrow \infty $ to the differential functional
equation 
\begin{equation}
D_{x}^{2}\alpha _{n}(x;\lambda )+D_{x}((u+\lambda ^{-1}v-\lambda )\alpha
_{n}(x;\lambda ))=0.  \label{B1.2e}
\end{equation}
\end{theorem}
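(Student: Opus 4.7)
My plan is to decompose the theorem into four sub-claims and attack each with a different tool suggested by the references in the introduction. \textbf{Verification of the Lax pair} (\ref{B1.2a})--(\ref{B1.2b}) reduces to the zero-curvature condition $D_t U - D_x V + [U,V] = 0$. Because both matrices are lower-triangular in their natural ordering, the commutator has a single non-trivial entry and the compatibility condenses into two scalar identities polynomial in $\lambda^{\pm 1}$; grouping by powers of $\lambda$ matches the result term-by-term against (\ref{B1.1}) and one of its differential consequences, which is a mechanical calculation. For the \textbf{Hopf--Cole linearization}, I read off from the first row of (\ref{B1.2a}) the scalar relation $f_x = \tfrac12(u+\lambda^{-1}v-\lambda)\,f$ and set $\alpha := f_x/f$, $w := u+\lambda^{-1}v-\lambda$. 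Using $\alpha_t = (f_t/f)_x$ together with the first row of (\ref{B1.2b}) for $f_t/f$, and separating the result into the three independent powers of $\lambda$, recovers (\ref{B1.1}) exactly, so the substitution $f = \exp\!\int\!\alpha\,dx$ linearizes the flow on the $f$-side.

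For the \textbf{hierarchy}, I would follow the gradient--holonomic scheme of \cite{PM,BPS}: seek a formal Laurent solution $\alpha(x;\lambda) \sim \sum_{j\ge 0} \alpha_j[u,v]\lambda^{-j}$ of the linear equation (\ref{B1.2e}), whose substitution produces a triangular recursion determining each $\alpha_j$ as an explicit differential polynomial in $u,v$. The positive truncation (\ref{B1.2d}) then defines the $n$-th vector field, and the advertised form (\ref{B1.2c}) of this flow is immediate because the right-hand side is literally $D_x[D_x\alpha_n + w\alpha_n]$, namely the polynomial part of the identity (\ref{B1.2e}) satisfied by the full series. Mutual commutativity of the resulting flows follows from a standard Adler-type argument: two positive truncations of a common asymptotic symbol satisfying the same linear equation commute modulo that equation.

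The \textbf{main obstacle} is the negative assertion that no further local conserved densities exist. Generically the coefficients $\alpha_j[u,v]$ would form an infinite tower of conservation laws, so one has to explain why that tower collapses here. The key observation is that (\ref{B1.2e}) is \emph{linear} in $\alpha$ and integrates once to $D_x\alpha + w\alpha = c(\lambda)$; that is, the generating function for densities is already an exact $D_x$-derivative. By induction on $j$, each $\alpha_j$ produced by the recursion is then a total $x$-derivative of a differential polynomial in $u,v$ and yields only a trivial conserved quantity; what survives is precisely the spectral asymptotics of $\int\!\alpha\,dx = \tfrac12\int(u+\lambda^{-1}v)\,dx - \tfrac12\lambda x$, whose only non-trivial coefficients are $\int u\,dx$ and $\int v\,dx$. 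To certify that these exhaust \emph{all} local densities --- not merely those extracted from the spectral problem --- I would run the differential-algebraic argument of \cite{PAPP,PAPP1} inside the finitely extended ring $\bar{\mathcal K}\{u,v;D_x^{-1}\sigma|N\}$ of \cite{POS}: a candidate local density of finite differential order satisfies a rank-bounded linear system whose only solutions modulo exact derivatives are affine combinations of $u$ and $v$. The step I expect to be the most technical is controlling simultaneously the differential order and the nonlocal generators $D_x^{-1}\sigma$ admitted into the ring, since that is precisely where any hidden conservation law would have to hide if the claim were false.
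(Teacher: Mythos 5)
Your overall strategy coincides with the paper's: the hierarchy and the conserved quantities are both extracted from the asymptotics of the scalar spectral problem $D_xf=\tfrac12(u+\lambda^{-1}v-\lambda)f$, the linearization is Hopf--Cole, and the matrix form comes from adjoining $\hat f$. Two of your steps differ in execution. First, you propose to \emph{verify} the Lax pair by the zero-curvature condition, whereas the paper \emph{derives} it: the substitution $u=D_x\ln\tilde g$ turns (\ref{B1.1}) into the linear system $D_t\tilde g=D_x^2\tilde g+v\tilde g$, $D_x\tilde g=u\tilde g$, and the second row of (\ref{B1.2a})--(\ref{B1.2b}) is then produced by the adjoint-mapping ansatz $D_x(\hat f f)=\chi[u,v;\lambda]f^2$ with $\chi=1$, which also yields the compatibility identities $D_x(\hat f f)=f^2$, $D_t(\hat f f)=(u+\lambda)f^2$. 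Your check is legitimate for a stated theorem, but it does not explain where the lower-triangular form comes from; that is a stylistic, not a logical, difference.

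The genuine problem is in your argument for the negative assertion. You claim that because (\ref{B1.2e}) integrates once to $D_x\alpha+w\alpha=c(\lambda)$, ``each $\alpha_j$ produced by the recursion is a total $x$-derivative.'' That is false: the recursion for the solution of (\ref{B1.2e}) gives $\alpha_0=\mathrm{const}$, $\alpha_1=u+\dots$, $\alpha_2=u_x+u^2+v+\dots$, none of which are exact. You have conflated the generating function of \emph{symmetries} (the $\alpha$ solving (\ref{B1.2e}), whose positive truncations produce the flows (\ref{B1.2c})) with the generating function of \emph{conserved densities}. The correct objects are the densities $\sigma_j$ in the exponent of the asymptotic solution $\varphi\sim\psi\exp\{-\lambda^2t-\lambda x+D_x^{-1}\sigma\}$ of the Noether--Lax equation $D_t\varphi+K^{\prime,\ast}\varphi=0$; the paper solves the resulting recurrences explicitly and finds $\sigma_0=u$, $\sigma_1=u_x+v$, and $\sigma_j\in\mathrm{Im}\,D_x$ for all $j\geq 2$ (equivalently, the density generator $f_x/f=\tfrac12(u+\lambda^{-1}v-\lambda)$ is an exact finite Laurent polynomial, which is the fact your closing sentence actually uses). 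Your induction as written proves nothing; you need to redo it for the $\sigma_j$, not the $\alpha_j$. Finally, the exhaustiveness claim (that the spectral densities are \emph{all} local conserved quantities) is left by you as an unexecuted ``rank-bounded linear system'' sketch; the paper grounds it in the gradient-holonomic correspondence of its Proposition 2.3 together with the explicit recurrence computation, so if you pursue your alternative differential-algebraic route you must actually carry out the order-bounding argument rather than gesture at it.
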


As a simple consequence of the Theorem \ref{Tm_B1.1} one finds that the
Burgers type dynamical system \ (\ref{B1.1}) does not allow on the
functional manifold $M$ a Hamiltonian formulation, and the corresponding
recursion operator 
\begin{equation}
\text{\ }\Phi :=\left( 
\begin{array}{cc}
\ D_{x}+D_{x}^{\ }uD_{x}^{-1} & \text{ \ \ }1 \\ 
D_{x}^{\ }vD_{x}^{-1} & \text{ \ \ \ }0%
\end{array}%
\right) ,  \label{B1.2f}
\end{equation}%
satisfying the determining commutator equation%
\begin{equation}
D_{t}\Phi =[K^{\prime },\Phi ],  \label{B1.2g}
\end{equation}%
for the two-component Burgers type dynamical system \ (\ref{B1.1}) and found
before in \cite{Fo,Ma}, proves to be not factorizable by means of compatible
Poissonian structures, as they on the whole do not exist.

The scalar Lax type representation \ (\ref{B1.2a}) can be reduced by means
of the nonlocal change of variables \ $\tilde{g}=$ $\ f^{2}\exp [\lambda
^{2}t-D_{x}^{-1}(\lambda ^{-1}v-\lambda )]$  $\in \Lambda ^{0}(\mathcal{\bar{%
K}\{}u,v;D_{x}^{-1}\sigma |N\})$ \ to a more simpler linear form%
\begin{equation}
D_{t}\ \tilde{g}=D_{xx}\ \tilde{g}+v\ \tilde{g}=0,\ \ D_{x}\ \tilde{g}=\ u\ 
\tilde{g},  \label{B1.3}
\end{equation}%
compatible for $\tilde{g}\in \Lambda ^{0}(\mathcal{\bar{K}\{}%
u,v;D_{x}^{-1}\sigma |N\})$ and not depending on the parameter $\lambda \in 
\mathbb{C}\backslash \{0\}.$ The above representation \ (\ref{B1.3}) can be
easily generalized to the following higher-order evolution equation case:%
\begin{equation}
D_{t}\tilde{g}=D_{x}^{n}\tilde{g}+v\text{ }\tilde{g}=0,\text{ \ \ }D_{x}%
\tilde{g}=\ u\ \tilde{g}\ ,  \label{B1.4}
\end{equation}%
where $n\in \mathbb{Z}_{+}.$ Making use of the mentioned above nonlocal
change of variables \ $\hat{g}=$ $\tilde{g}\exp (\lambda ^{-1}D_{x}^{-1}v)$ $%
\in \Lambda ^{0}(\mathcal{\bar{K}\{}u,v;D_{x}^{-1}\sigma |N\}),$ one can
obtain a new infinite hierarchy of two-component Lax type integrable
polynomial Burgers type dynamical systems, generalizing those discussed
before in \ \cite{Ta,POS}. \ For instance, at $n=3$ \ we find the following
dynamical Burgers type dynamical system of the third order:%
\begin{eqnarray}
\ u_{t} &=&\ u_{3x}+3\ (uu_{x})_{x}+\ 3u^{2}u_{x}+v_{x},  \label{B1.5} \\
v_{t} &=&(ur[u,v])_{x},  \notag
\end{eqnarray}%
where $r:J[u,v]\rightarrow C^{\infty }(\mathbb{R}^{2};\mathbb{R}^{2})$ is a
polynomial mapping on the jet-space $J(\mathbb{R}^{2};\mathbb{R}^{2})\ $of
elements $(x,t;u,v,D_{x}u,D_{x}v,D_{t}u,D_{t}v,D_{x}^{2}u,D_{x}^{2}v,...)\in
J(\mathbb{R}^{2};\mathbb{R}^{2}),$ \ suitably determined by the relationship
\ (\ref{B1.2d}) at $n=2.$ Its scalar Lax type representation easily obtains,
respectively, either from \ (\ref{B1.2a}),\ (\ref{B1.2b}) or from \ (\ref%
{B1.4}). The latter at $n=3$ easily gives rise to the scalar Lax type
representation 
\begin{equation}
\begin{array}{c}
D_{x}\hat{g}=(u+\lambda ^{-1}v)\hat{g}, \\ 
D_{t}\hat{g}=\ D_{xxx}\hat{g}+3\lambda ^{-1}vD_{xx}\hat{g}+3(v_{x}/\lambda
+v^{2}/\lambda ^{2})D_{x}\hat{g}+ \\ 
+(v_{xx}/\lambda +3vv_{x}/\lambda ^{2}+v^{3}/\lambda ^{3}-(u\eta \lbrack
u,v])/\lambda )\hat{g}=0,\text{ \ \ }%
\end{array}
\label{B1.6}
\end{equation}%
compatible for all $\lambda \in \mathbb{C}\backslash \{0\}$ and $\hat{g}\in
\Lambda ^{0}(\mathcal{\bar{K}\{}u,v;D_{x}^{-1}\sigma |N\})^{2},$ which can
be suitably extended by means of the related adjoint mapping to the matrix
representation.

\section{Differential-algebraic preliminaries}

As our consideration of the integrability problem, discussed above, will be
based on some differential-algebraic techniques, to be for further more
precise, we need to involve here some additional differential-algebraic
preliminaries \cite{GMS,GD,GD1,GD2,Go,BPS}.

Take the ring $\mathcal{K}:=\mathbb{R}\{\{x,t\}\},$ $(x,t)\in \mathbb{%
R\times }(0,T\mathbb{)},$ of convergent germs of real-valued smooth
functions from $C^{\infty }(\mathbb{R}^{2};\mathbb{R})$ and construct the
associated differential quotient ring $\mathcal{K}\{u,v\}:=Quot(\mathcal{K}%
[\Theta u,\Theta v])$ with respect to two functional variables $u,v\in 
\mathcal{K},$ where $\Theta $ denotes \cite{Ka,Ri,GD0,GD,GMS} the standard
monoid of all commuting differentiations $D_{x}$ and $D_{t},$ satisfying the
standard Leibniz condition, and defined by the natural conditions 
\begin{subequations}
\begin{equation}
D_{x}(x)=1=D_{t}(t),\ \ \ D_{t}(x)=0=D_{x}(t),  \label{B2.1}
\end{equation}%
The ideal $I\{u,v\}\subset \mathcal{K}\{u,v\}$ is called differential if the
condition $I\{u,v\}=\Theta I\{u,v\}$ holds. In the differential ring $%
\mathcal{K}\{u,v\},$ interpreted as an invariant differential ideal in $%
\mathcal{K},$ there are two naturally defined differentiations 
\end{subequations}
\begin{equation}
D_{t},\ D_{x}:\ \mathcal{K}\{u,v\}\rightarrow \mathcal{K}\{u,v\},
\label{B2.2}
\end{equation}%
satisfying the commuting relationship%
\begin{equation}
\lbrack D_{t},\ D_{x}]=0.  \label{B2.3}
\end{equation}%
Consider the ring $\mathcal{K\{}u,v\},$ $u,v\in \mathcal{K},$ and the
exterior differentiation $d:\mathcal{K\{}u,v\}\rightarrow \Lambda ^{1}(%
\mathcal{K\{}u,v\}),$ $:d:\Lambda ^{p}(\mathcal{K\{}u,v\})\rightarrow
\Lambda ^{p+1}(\mathcal{K\{}u,v\})\ $for $p\in \mathbb{Z}_{+},$ acting in
the freely generated Grassmann algebras $\Lambda (\mathcal{K\{}u,v\})=\oplus
_{p\in \mathbb{Z}_{+}}\Lambda ^{p}(\mathcal{K\{}u,v\})$ over the field $%
\mathbb{C},$ \ where by definition,%
\begin{equation}
\begin{array}{c}
\Lambda ^{1}(\mathcal{K\{}u,v\}):=\ \ \mathcal{K\{}u,v\}dx+\ \mathcal{K\{}%
u,v\}dt+ \\ 
+\sum_{j,k\in \mathbb{Z}_{+}}\mathcal{K\{}u,v\}du^{(j,k)}+\sum_{j,k\in 
\mathbb{Z}_{+}}\mathcal{K\{}u,v\}dv^{(j,k)}, \\ 
u^{(j,k)}:=D_{t}^{j}D_{x}^{k}u,\text{ \ \ \ \ \ \ }%
v^{(j,k)}:=D_{t}^{j}D_{x}^{k}v, \\ 
\Lambda ^{2}(\mathcal{K\{}u,v\}):=\mathcal{K\{}u,v\}d\Lambda ^{1}(\mathcal{%
K\{}u,v\}),...,\text{ } \\ 
\Lambda ^{p+1}(\mathcal{K\{}u,v\}):=\mathcal{K\{}u,v\}d\Lambda ^{p}(\mathcal{%
K\{}u,v\}),%
\end{array}%
\text{ \ \ }  \label{B2.4}
\end{equation}%
The triple $\mathcal{A}:\mathcal{=}(\mathcal{K\{}u,v\},\Lambda (\mathcal{K\{}%
u,v\});d)$ will be called \textit{the Grassmann differential algebra} with
generatrices $u,v\in \mathcal{K}.$ In the algebra $\mathcal{A},$ generated
by $u,v\in \mathcal{K},$ one naturally defines $\ $the action of
differentiations $D_{t},D_{x}$ and $\ \partial /\partial u^{(j,k)},\partial
/\partial v^{(j,k)}:\mathcal{A}\rightarrow \mathcal{A},j,k\in \mathbb{Z}_{+},
$ as follows: 
\begin{equation}
\begin{array}{c}
D_{t}u^{(j,k)}=u^{(j+1,k)},D_{x}u^{(j,k)}=u^{(j,k+1)}, \\ 
D_{t}\ v^{(j,k)}=\ v^{(j+1,k)},D_{x}v^{(j,k)}=v^{(j,k+1)}, \\ 
D_{t}du^{(j,k)}=du^{(j+1,k)},D_{x}du^{(j,k)}=du^{(j,k+1)}, \\ 
D_{t}dv^{(j,k)}=dv^{(j+1,k)},D_{x}dv^{(j,k)}=dv^{(j,k+1)}, \\ 
dP[u,v]=\sum_{j,k\in \mathbb{Z}_{+}}du^{(j,k)}\wedge \partial
P[u,v]/\partial u^{(j,k)}+\sum_{j,k\in \mathbb{Z}_{+}}dv^{(j,k)}\wedge
\partial P[u,v]/\partial v^{(j,k)}= \\ 
=\sum_{j,k\in \mathbb{Z}_{+}}(\pm )\partial P[u,v]/\partial u^{(j,k)}\wedge
du^{(j,k)}+ \\ 
+\sum_{j,k\in \mathbb{Z}_{+}}(\pm )\partial P[u,v]/\partial v^{(j,k)}\wedge
dv^{(j,k)}\ :=<P^{\prime }[u,v],\wedge (du,dv)^{\intercal }>_{\mathbb{R}%
^{2}},%
\end{array}
\label{B2.5}
\end{equation}%
where the sign $"\wedge "$ denotes the standard \cite{Go} exterior
multiplication in $\Lambda (\mathcal{K\{}u,v\}),$ and for any $P[u,v]\in
\Lambda (\mathcal{K\{}u,v\})$ \ the mapping 
\begin{equation}
P^{\prime }[u,v]\wedge :\Lambda ^{0}(\mathcal{K\{}u,v\})^{2}\ \rightarrow
\Lambda (\mathcal{K\{}u,v\}),  \label{B2.5a}
\end{equation}%
is linear. Moreover, the commutation relationships 
\begin{equation}
D_{x}d=d\text{ }D_{x},\text{ \ \ \ }D_{t}d=d\text{ }D_{t}  \label{R5aaa}
\end{equation}%
hold in the Grassmann differential algebra $\mathcal{A}.$ \ The following
remark \cite{GMS} is also important.

\begin{remark}
Any Lie derivative $L_{V}:\mathcal{K\{}u,v\}\mathcal{\rightarrow K\{}u,v\},$
\ satisfying the condition $L_{V}:\mathcal{K}\mathcal{\subset }\mathcal{K},$
can be uniquely extended to the differentiation $L_{V}:\mathcal{A}%
\rightarrow \mathcal{A},$ satisfying the commutation condition $L_{V}d=d$ $%
L_{V}.$
\end{remark}

The \textit{variational derivative, or the functional gradient }$\nabla
P[u,v]\in \Lambda (\mathcal{K\{}u,v\})^{2}$ with respect to the variables $%
u,v\in \mathcal{K},$ \textit{is defined }for any $P[u,v]\in \Lambda (%
\mathcal{K\{}u,v\})$ \textit{by means of the following expression:} 
\begin{equation}
\mathrm{grad}P[u,v]=P^{\prime ,\ast }[u,v](1),  \label{B2.7}
\end{equation}%
where a mapping $P^{\prime ,\ast }[u,v]:\Lambda ^{0}(\mathcal{K\{}%
u,v\})\rightarrow \Lambda ^{0}(\mathcal{K\{}u,v\})^{2}$ is the formal
adjoint mapping for that of \ (\ref{B2.5a}). The latter is strongly based on
the following important lemma, stated for a special case in \cite%
{GMS,GD0,GD,GD1,GD2,Ol}.

\begin{lemma}
\label{Lm_B2.1}Let the differentiations $D_{x}$ and $D_{t}:\Lambda (\mathcal{%
K\{}u,v\})\rightarrow \Lambda (\mathcal{K\{}u,v\})$ satisfy the conditions \
(\ref{B2.5}). Then the mapping%
\begin{equation}
\begin{array}{c}
Ker\mathrm{grad}/(\mathrm{Im}d\oplus \mathbb{C)\simeq }H^{1}(\mathcal{A}):=
\\ 
\\ 
=Ker\{d:\Lambda ^{1}(\mathcal{K\{}u,v\})\rightarrow \Lambda ^{2}(\mathcal{K\{%
}u,v\})\}/d\Lambda ^{0}(\mathcal{K\{}u,v\})\ 
\end{array}
\label{B2.8}
\end{equation}%
is a canonical isomorphism, where $H^{1}(\mathcal{A})\ $ is the
corresponding cohomology class of the Grassmann complex $\Lambda (\mathcal{%
K\{}u,v\}).$
\end{lemma}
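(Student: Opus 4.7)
The plan is to prove this as a version of the Algebraic Poincar\'{e} Lemma for the Grassmann differential algebra $\mathcal{A}$, in the same spirit as the proofs cited just before the lemma statement but adapted to the present two-generator, two-independent-variable setting with commuting derivations $D_{x},D_{t}$. The two tools I would use throughout are the commutation relations $D_{x}d=dD_{x}$ and $D_{t}d=dD_{t}$ from (R5aaa) together with iterated integration by parts in the jet variables.

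\textbf{Forward map.} First I would define a candidate map $\Phi :\ker(\mathrm{grad})\to H^{1}(\mathcal{A})$. Given $P\in \Lambda ^{0}(\mathcal{K}\{u,v\})$ with $\mathrm{grad}\,P=0$, expand $dP$ using (2.5), write each $du^{(j,k)}=D_{t}^{j}D_{x}^{k}\,du$ and $dv^{(j,k)}=D_{t}^{j}D_{x}^{k}\,dv$ by (R5aaa), and transpose the derivations onto the coefficients via repeated Leibniz. This produces a canonical identity of the schematic form
\[
dP \;=\; du\wedge (\mathrm{grad}\,P)_{u}+dv\wedge (\mathrm{grad}\,P)_{v}+D_{x}\alpha _{P}+D_{t}\beta _{P},
\]
with $\alpha _{P},\beta _{P}\in \Lambda ^{1}(\mathcal{K}\{u,v\})$ assembled explicitly from the jet coordinates of $P$ through the standard Euler-type operator. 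The hypothesis $\mathrm{grad}\,P=0$ then forces $dP=D_{x}\alpha _{P}+D_{t}\beta _{P}$, and the pair $(\alpha _{P},\beta _{P})$ assembles (using the horizontal generators $dx,dt$ in $\Lambda ^{1}$) into a $d$-closed 1-form $\omega _{P}$ whose cohomology class I would declare to be $\Phi (P):=[\omega _{P}]\in H^{1}(\mathcal{A})$.

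\textbf{Quotienting and the inverse map.} Next I would verify that $\Phi $ kills constants (for which $dP=0$ and $\omega _{P}=0$) and elements of $\mathrm{Im}\,d$ (for which the integration-by-parts construction produces an $\omega _{P}$ that is itself $d$-exact, hence trivial in $H^{1}(\mathcal{A})$), so $\Phi $ descends to the stated quotient. For the inverse, a $d$-closed 1-form $\omega \in \Lambda ^{1}(\mathcal{K}\{u,v\})$ can be contracted against an Euler/Lagrange-type contracting homotopy on the bigraded Grassmann complex to extract a functional $P_{\omega }\in \Lambda ^{0}$ satisfying $\mathrm{grad}\,P_{\omega }=0$. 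Checking that the two constructions are mutually inverse modulo $\mathrm{Im}\,d\oplus \mathbb{C}$ is again a matter of iterated integration by parts.

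\textbf{Main obstacle.} The principal difficulty I anticipate is path-independence of the construction of $\omega _{P}$: proving that its cohomology class is independent of the order in which the mixed monomials $D_{t}^{j}D_{x}^{k}$ are transposed across $d$ during the integration by parts. This is precisely the chain-level content of the Algebraic Poincar\'{e} Lemma for a bigraded derivation algebra with two commuting horizontal derivations, and it requires exhibiting an explicit homotopy comparing any two orderings. Once path-independence and the corresponding bicomplex lemma are in hand, checking that $\Phi $ and its candidate inverse compose to the identity on the quotient is routine bookkeeping.
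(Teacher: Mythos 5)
The paper does not actually prove Lemma~\ref{Lm_B2.1}: it is imported from the cited sources (Gelfand--Manin--Shubin, Gelfand--Dickey, Olver) with no in-paper argument, so there is nothing internal to compare your attempt against. Your plan is, in outline, the proof those references give --- the first-variation identity $dP=\langle \mathrm{grad}\,P,(du,dv)^{\intercal}\rangle_{\mathbb{R}^{2}}+D_{x}(\cdot)+D_{t}(\cdot)$ obtained by transposing the monomials $D_{t}^{j}D_{x}^{k}$ across $d$ via (\ref{R5aaa}) and Leibniz, followed by a contracting homotopy on the bigraded complex --- and you correctly read the denominator $\mathrm{Im}\,d\oplus\mathbb{C}$ as the horizontal image $\mathrm{Im}\,D_{x}\oplus\mathrm{Im}\,D_{t}\oplus\mathbb{C}$, which is what the paper's own follow-up sentence (that triviality of $H^{1}(\mathcal{A})$ is equivalent to $\mathrm{Ker}\,\nabla=\mathrm{Im}\,D_{x}\oplus\mathrm{Im}\,D_{t}\oplus\mathbb{C}$) shows is intended.

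Two substantive cautions. First, the ``main obstacle'' you single out, path-independence of the order of transposition, is not where the difficulty lies: since $[D_{x},D_{t}]=0$, different orderings of the integration by parts change $\alpha_{P},\beta_{P}$ only by terms already lying in $\mathrm{Im}\,D_{x}+\mathrm{Im}\,D_{t}$, i.e.\ by exactly what the quotient kills, and the decomposition is canonical once phrased through the higher (interior) Euler operators. The genuine content of the lemma is the algebraic Poincar\'e lemma for the two-derivation bicomplex --- the explicit contracting homotopy --- together with the identification of the residual obstruction with $H^{1}(\mathcal{A})$; your plan invokes that homotopy for the inverse map but never constructs it, so as written the proposal defers the key step rather than supplying it. Second, the passage from $dP=D_{x}\alpha_{P}+D_{t}\beta_{P}$ (with $\alpha_{P},\beta_{P}\in\Lambda^{1}$) to a single $d$-closed one-form $\omega_{P}$ does not parse as stated: wedging two one-forms against $dx,dt$ produces two-forms, not a one-form. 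What you actually need is to extract horizontal potentials $a,b\in\Lambda^{0}(\mathcal{K}\{u,v\})$ with $P=c+D_{x}a+D_{t}b$ whenever possible and to show that the obstruction to doing so, equivalently the ambiguity in the pair $(a,b)$, is measured by a $d$-closed one-form modulo exact ones. That verification is short, but it is precisely where the isomorphism with $H^{1}(\mathcal{A})$ enters and should not be dismissed as routine bookkeeping.
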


It is well known \cite{Ri} that in the case of the differential ring $%
\mathcal{K\{}u,v\}$ not all of the cohomology classes $H^{j}(\mathcal{A}%
),j\in \mathbb{Z}_{+},$ are trivial. \ Nonetheless, one can impose on the
functions $u,v\in \mathcal{K}$ some additional restrictions, which will give
rise to the condition $H^{1}(\mathcal{A}),$ or equivalently, to the
relationship $Ker\nabla =$ $\mathrm{Im}D_{x}\oplus \mathrm{Im}D_{t}\oplus 
\mathbb{C}.$ In addition, the following simple relationship will hold:%
\begin{equation}
\mathrm{grad}\ (\mathrm{Im}D_{x}\oplus \mathrm{Im}D_{t})=0.  \label{B2.9}
\end{equation}

Based on Lemma \ \ref{Lm_B2.1} one can \ define the equivalence class $%
\widetilde{\mathcal{A}}:=\mathcal{A}/\{\mathrm{Im}D_{x}\oplus \mathrm{Im}%
D_{t}\oplus \mathbb{R\}}$ $:\mathbb{=}\mathcal{D(A};dxdt),$ whose elements
will be called \textit{functionals, }that is any element $\gamma \in $ $%
\mathcal{D(A};dxdt)$ can be\textit{\ } represented as a suitably defined
integral $\gamma :=\int \int dxdt\gamma \lbrack u,v]\in $ $\mathcal{D(A}%
;dxdt)$ for some $\gamma \lbrack u,v]\in \Lambda (\mathcal{K\{}u,v\})$ with
respect to the Lebesgue measure $dxdt$ on $\mathbb{R}^{2}.$

Consider now our two-component dynamical system \ (\ref{B1.1}) as a
polynomial differential constraint%
\begin{equation}
D_{t}(u,v)^{\intercal }=K[u,v],\   \label{B2.10}
\end{equation}%
imposed on the ring $\mathcal{K\{}u,v\}.$ The following definitions will be
useful for our further analysis.

\begin{definition}
Let the reduced ring $\mathcal{\bar{K}}\{u,v\}:=\mathcal{K}%
\{u,v\}|_{D_{t}(u,v)^{\intercal }=K[u,v]}\ .$ Then the triple \textit{\ }$%
\overline{\mathcal{A}}:=(\mathcal{\bar{K}}\{u,v\},\Lambda (\mathcal{\bar{K}}%
\{u,v\}),d)$ will be called a reduced Grassmann differential algebra over
the reduced ring $\mathcal{\bar{K}}\{u,v\}.$
\end{definition}

\begin{definition}
Any pair of elements $(\gamma \lbrack u,v],\rho \lbrack u,v])^{\intercal
}\in \Lambda ^{0}(\mathcal{\bar{K}\{}u,v\})^{2},$ satisfying the
relationship 
\begin{equation}
D_{t}\gamma \lbrack u,v]+D_{x}\rho \lbrack u,v]=0,  \label{B2.10a}
\end{equation}%
is called a scalar conservative quantity with respect to the
differentiations $D_{x}$ and $D_{t}.$
\end{definition}

Based on the differential-algebraic setting, described above, one can
naturally define the \ spaces of functionals $\mathcal{D(}\overline{\mathcal{%
A}};dx):=$ $\overline{\mathcal{A}}/\{D_{x}\overline{\mathcal{A}}\mathcal{\}}$
and $\mathcal{D(}\overline{\mathcal{A}};dt)=$ $\overline{\mathcal{A}}/\{D_{t}%
\overline{\mathcal{A}}\mathcal{\}}$ on the \textit{the reduced Grassmann
differential algebra }$\overline{\mathcal{A}}.$ From the functional point of
view these \ factor spaces $\mathcal{D(\overline{\mathcal{A}}};dx)$ and $%
\mathcal{D(\overline{\mathcal{A}}};dt)$ can be understood more classically
as the corresponding spaces of suitably defined integral expressions subject
to the measures $dx$ and $dt,$ respectively. Then the relationship (\ref%
{B2.10a}) means equivalently that the functional $\gamma :=\int dx\gamma
\lbrack u,v]\in \mathcal{D(}\overline{\mathcal{A}};dx)$ is a conserved
quantity for the differentiation $D_{t},$ and the functional $\Upsilon
:=\int dt\rho \lbrack u,v]\in \mathcal{D(}\overline{\mathcal{A}};dt)$ is a
conserved quantity for the differentiation $D_{x}.$

Since the differential relationship \ (\ref{B2.10}) naturally defines \cite%
{GMS,Go} on the reduced ring $\mathcal{\overline{K}\{}u,v\mathcal{\}}$ a
smooth vector field $K:\mathcal{\overline{K}\{}u,v\mathcal{\}\rightarrow }T%
\mathcal{(\overline{K}\{}u,v\mathcal{\})},$ one can construct the
corresponding Lie derivative $L_{K}:\overline{\mathcal{A}}\rightarrow 
\overline{\mathcal{A}}$ \ along this vector field and calculate the
differential Lax type \cite{La} expression

\begin{equation}
\partial \varphi \lbrack u,v]/\partial t+L_{K}\varphi \lbrack u,v]=0
\label{B2.10aa}
\end{equation}%
$\ $ for the element $\varphi \lbrack u,v]:=\mathrm{grad}\gamma \lbrack
u,v]\in \Lambda ^{0}(\mathcal{\bar{K}\{}u,v;D_{x}^{-1}\sigma |N\})^{2},$
where $\mathcal{\bar{K}\{}u,v;D_{x}^{-1}\sigma |N\}$ denotes some  finitely
extended differential ring $\mathcal{\bar{K}\{}u,v\}\ $and  $\gamma \in 
\mathcal{D(}\overline{\mathcal{A}};dx)$ is an arbitrary scalar conserved
quantity with respect to the differentiation $D_{t}.$   The following
classical Noether-Lax lemma \cite{La,Ol,BPS,PM,Ol}, inverse to the Lax
relationship \ (\ref{B2.10aa}), holds.

\begin{lemma}
\label{Lm_B2.2}\textbf{\ (\textit{E.Noether-P.Lax}) }Let a quantity $\varphi
\lbrack u,v]\in \Lambda ^{0}(\mathcal{\bar{K}\{}u,v;D_{x}^{-1}\sigma
|N\})^{2}$ be such that the following equation%
\begin{equation}
D_{t}\varphi \lbrack u,v]+K^{\prime ,\ast }[u,v]\varphi \lbrack u,v]=0,
\label{B2.10b}
\end{equation}%
equivalent to \ (\ref{B2.10aa}), holds \ in the ring $\mathcal{\bar{K}\{}%
u,v;D_{x}^{-1}\sigma |N\}\ $ satisfying the differential constraint \ (\ref%
{B2.10}). Then, if the Volterra condition $\varphi ^{\prime ,\ast
}[u,v]=\varphi ^{\prime }[u,v]\ \ $is satisfied in the ring$\mathcal{\ \bar{K%
}\{}u,v;D_{x}^{-1}\sigma |N\},$ the constructed homology type functional 
\begin{equation}
\gamma :=\int_{0}^{1}d\lambda \int dx<\varphi \lbrack \lambda u,\lambda
v],(u,v)^{\intercal }>_{\mathbb{C}^{2}}\in \mathcal{D(}\overline{\mathcal{A}}%
;dx)  \label{B2.10c}
\end{equation}%
is a scalar conserved quantity with respect to the differentiation $D_{t}.$
\end{lemma}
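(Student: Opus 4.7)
The overall plan is to exploit the duality between the variational derivative and the homotopy formula~(\ref{B2.10c}), together with the cohomological description of $\ker\,\mathrm{grad}$ given by Lemma~\ref{Lm_B2.1}. I proceed in two steps: first I show that the Volterra symmetry hypothesis $\varphi^{\prime,\ast}=\varphi'$ is precisely what converts the homotopy integral defining $\gamma$ into an honest antiderivative, i.e.\ that $\mathrm{grad}\,\gamma[u,v]=\varphi[u,v]$ inside the reduced Grassmann differential algebra $\overline{\mathcal{A}}$; then I use the Lax-type equation~(\ref{B2.10b}) to conclude that $\mathrm{grad}(L_K\gamma)=0$, so that $\gamma\in\mathcal{D}(\overline{\mathcal{A}};dx)$ is a scalar conserved quantity in the sense of~(\ref{B2.10a}).

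For Step~1, I differentiate $\gamma$ against an arbitrary test direction $h\in\Lambda^0(\overline{\mathcal{K}}\{u,v;D_x^{-1}\sigma|N\})^2$ under the $\lambda$-integral, obtaining
\begin{equation*}
\gamma'[u,v](h)=\int_0^1 d\lambda\int dx\,\bigl[\,\lambda\,\langle\varphi'[\lambda u,\lambda v](h),(u,v)^{\intercal}\rangle_{\mathbb{C}^2}+\langle\varphi[\lambda u,\lambda v],h\rangle_{\mathbb{C}^2}\,\bigr].
\end{equation*}
Transposing $\varphi'$ into the right-hand slot by integration by parts modulo $\mathrm{Im}\,D_x$ replaces the first pairing with $\lambda\,\langle\varphi^{\prime,\ast}[\lambda u,\lambda v]((u,v)^{\intercal}),h\rangle_{\mathbb{C}^2}$, and the Volterra identity $\varphi^{\prime,\ast}=\varphi'$ lets me recognize the integrand as $\tfrac{d}{d\lambda}\bigl(\lambda\,\varphi[\lambda u,\lambda v]\bigr)$. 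Performing the $\lambda$-integration from $0$ to $1$ then leaves $\gamma'[u,v](h)=\int dx\,\langle\varphi[u,v],h\rangle_{\mathbb{C}^2}$, which by definition~(\ref{B2.7}) is exactly $\mathrm{grad}\,\gamma[u,v]=\varphi[u,v]$.

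For Step~2, I compute $\mathrm{grad}(L_K\gamma)$ directly using the identity from Step~1. Since $L_K\gamma\equiv\int dx\,\langle\varphi[u,v],K[u,v]\rangle_{\mathbb{C}^2}$ modulo $\mathrm{Im}\,D_x$, the standard variational-derivative computation together with the Volterra symmetry of $\varphi$ yields
\begin{equation*}
\mathrm{grad}(L_K\gamma)\;=\;K^{\prime,\ast}[u,v]\,\varphi[u,v]\;+\;L_K\,\varphi[u,v],
\end{equation*}
which vanishes by the hypothesis~(\ref{B2.10b}) (equivalent, via the commutativity $L_K d=dL_K$, to the Noether--Lax identity~(\ref{B2.10aa})). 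Thus $L_K\gamma\in\ker\,\mathrm{grad}$, and by Lemma~\ref{Lm_B2.1} together with the reduction~(\ref{B2.9}) of $\ker\,\mathrm{grad}$ I conclude $L_K\gamma\in\mathrm{Im}\,D_x\oplus\mathrm{Im}\,D_t\oplus\mathbb{C}$. Since on the reduced ring $\overline{\mathcal{K}}\{u,v\}$ the total derivative $D_t$ agrees with $L_K$ on $t$-independent data, this is exactly the existence of $\rho[u,v]$ with $D_t\gamma[u,v]+D_x\rho[u,v]=0$, establishing that $\gamma$ is a scalar conserved quantity.

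The main obstacle I expect is the extended-ring bookkeeping rather than any algebraic subtlety: each integration-by-parts step must be verified to stay inside $\overline{\mathcal{K}}\{u,v;D_x^{-1}\sigma|N\}$, the homotopy evaluation $\varphi[\lambda u,\lambda v]$ must actually lie in this ring for every $\lambda\in[0,1]$, and the cohomological identification of Lemma~\ref{Lm_B2.1} must remain applicable in the presence of the possibly nonlocal antiderivative $D_x^{-1}\sigma$. Once these technical points are secured, the algebraic core of the argument — the homotopy reconstruction of $\gamma$ from $\varphi$ under Volterra symmetry, and the subsequent vanishing of $\mathrm{grad}(L_K\gamma)$ via~(\ref{B2.10b}) — is essentially routine.
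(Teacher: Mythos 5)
The paper offers no proof of Lemma \ref{Lm_B2.2}: it is quoted as a classical result and referred to the literature (\cite{La,Ol,BPS,PM}), so there is nothing in the text to compare your argument against line by line. Your proposal is the standard proof of this lemma and is essentially correct: Step~1 is the usual homotopy (Volterra) reconstruction, where the self-adjointness $\varphi^{\prime,\ast}=\varphi^{\prime}$ is exactly what turns the integrand into $\tfrac{d}{d\lambda}\langle\lambda\varphi[\lambda u,\lambda v],h\rangle$ modulo $\mathrm{Im}\,D_{x}$ and yields $\mathrm{grad}\,\gamma=\varphi$; Step~2 is the standard computation $\mathrm{grad}\int dx\,\langle\varphi,K\rangle=\varphi^{\prime}K+K^{\prime,\ast}\varphi$ (again via the Volterra condition), which vanishes by (\ref{B2.10b}) since $\varphi^{\prime}K$ is precisely $D_{t}\varphi$ on the reduced ring, and Lemma \ref{Lm_B2.1} then places the time derivative of the density in $\mathrm{Im}\,D_{x}$, i.e.\ gives the conservation law (\ref{B2.10a}). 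Two small points deserve care: in Step~2 write $\varphi^{\prime}[u,v]K[u,v]$ rather than $L_{K}\varphi$ (the full Lie derivative of a covector already contains the $K^{\prime,\ast}\varphi$ term, so your displayed identity double-counts it unless $L_{K}$ is read as the directional derivative only); and the cohomological identification $\ker\mathrm{grad}=\mathrm{Im}\,D_{x}\oplus\mathrm{Im}\,D_{t}\oplus\mathbb{C}$ is, as the paper itself warns after Lemma \ref{Lm_B2.1}, conditional on restrictions on $u,v$ (Schwartz-type decay disposes of the constant summand), so the bookkeeping issues you flag at the end are indeed the only substantive gaps to close.
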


Assume now that the nonlinear two-component polynomial dynamical system \ (%
\ref{B2.10}) possesses a nontrivial compatible differential Lax type
representation in the form 
\begin{equation}
D_{x}f(x,t;\lambda )=l[u,v;\lambda ]f(x,t;\lambda ),\text{ \ \ \ \ \ \ }%
D_{t}f(x,t;\lambda )=p[u,v;\lambda ]f(x,t;\lambda )  \label{B2.11}
\end{equation}%
for some $\ $matrices$\ \ l[u,v;\lambda ],p[u,v;\lambda ]\in End$ $\Lambda
^{0}(\mathcal{\bar{K}\{}u,v\})^{q},$ $f(x,t;\lambda )\in \Lambda ^{0}(%
\mathcal{\bar{K}\{}u,v;D_{x}^{-1}\sigma |N\})^{q},$ analytically depending
on a parameter $\lambda \in \mathbb{C},$ where $q\in \mathbb{Z}%
_{+}\backslash \{0,1\}$ is finite. Then the following important proposition,
based on the gradient-holonomic approach, devised before in \cite{BPS,PM},
holds.

\begin{proposition}
\label{Prop_B2.3}\bigskip The Lax type integrable dynamical system \ (\ref%
{B2.10}) possesses a set (either finite or infinite) of naturally ordered
functionally independent scalar conserved differential quantities 
\begin{equation}
D_{t}\sigma _{j}[u,v]+D_{x}\rho _{j}[u,v]=0,  \label{B2.12}
\end{equation}%
where the pairs $(\sigma _{j}[u,v],\rho _{j}[u,v])^{\intercal }\in \Lambda
^{0}(\mathcal{\bar{K}\{}u,v\})^{2},j\in \mathbb{Z}_{+}.$
\end{proposition}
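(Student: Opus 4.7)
The plan is to apply the gradient--holonomic scheme to the compatible pair (\ref{B2.11}): reduce it to a scalar Riccati-type flow in the spectral parameter $\lambda$, extract the sequence of conserved densities (\ref{B2.12}) from an asymptotic expansion as $|\lambda|\to\infty$, and establish functional independence from the jet-order stratification of the resulting recursion.

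First, I would introduce a scalar logarithmic-derivative quantity $\sigma(x,t;\lambda)$ built from the fundamental solution $f$ of (\ref{B2.11}). A natural choice is a Floquet-type invariant, e.g.\ $\sigma := \langle h, D_x f\rangle/\langle h, f\rangle$ for a suitable covector $h\in\Lambda^0(\bar{\mathcal{K}}\{u,v\})^q$, or more invariantly the scalar encoding the projection of $l$ onto an invariant eigendirection. From the $x$-equation of (\ref{B2.11}) this $\sigma$ satisfies a Riccati equation $D_x\sigma = R_x[\sigma;u,v;\lambda]$ quadratic in $\sigma$ and rational in $\lambda$; analogously, the $t$-equation yields $D_t\sigma = D_x\tau[\sigma;u,v;\lambda]$ for an explicit $\tau$ computable from $l$, $p$, and $\sigma$. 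The zero-curvature condition $D_t l - D_x p + [l,p] = 0$, which on $\bar{\mathcal{K}}\{u,v\}$ is just the reduced form of (\ref{B2.10}), then forces the single $\lambda$-dependent conservation law $D_t\sigma + D_x(-\tau) = 0$ inside the extended ring $\bar{\mathcal{K}}\{u,v;D_x^{-1}\sigma|N\}$.

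Second, I would construct the formal WKB-type expansion $\sigma(x,t;\lambda)\sim\sum_{j\in\mathbb{Z}_+}\lambda^{-j}\sigma_j[u,v]$ and substitute it into the Riccati equation. Matching like powers of $\lambda^{-1}$ produces a recursive system that determines each $\sigma_j$ (modulo constants and total $x$-derivatives) as a differential polynomial in $u,v$. Setting $\rho_j := -[\tau]_j$, the coefficient of $\lambda^{-j}$ in $\tau$ evaluated on this asymptotic $\sigma$, the conservation law of the previous step decouples order-by-order into precisely (\ref{B2.12}). Volterra-symmetry of the resulting $\mathrm{grad}\,\sigma_j$ will follow from the Noether--Lax lemma (Lemma \ref{Lm_B2.2}) once one checks that the $\sigma_j$ are of gradient type, which is automatic from the construction via $D_x\ln(\cdot)$.

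Third, functional independence would follow from a jet-triangularity observation: the Riccati recursion propagates the highest-order derivative of $(u,v)$ entering $\sigma_j$ strictly upward with $j$, so no nontrivial linear combination of the $\sigma_j$ lies in $\mathrm{Im}\,D_x$, and Lemma \ref{Lm_B2.1} together with (\ref{B2.9}) then yields independence of the induced functionals in $\mathcal{D}(\overline{\mathcal{A}};dx)$. The main obstacle is controlling when the densities $\sigma_j$ remain genuinely nontrivial: for generic Lax pairs with nondegenerate leading $\lambda$-symbol the recursion produces an honest infinite hierarchy, but in triangular or otherwise degenerate situations---exactly as occurs for (\ref{B1.1}), whose matrix $l$ in (\ref{B1.2a}) is lower triangular---many $\sigma_j$ can collapse to total $x$-derivatives and the effective family reduces to a finite one. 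Isolating the structural condition on $l[u,v;\lambda]$ that discriminates these two regimes is the delicate point, and it is precisely this dichotomy that later enables the sharp negative conclusion in Theorem \ref{Tm_B1.1}.
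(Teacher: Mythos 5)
Your argument is correct in substance and reaches the same conclusion, but it is organized around a genuinely different pivot than the paper's proof. The paper does not work with the logarithmic derivative of the fundamental solution $f$ of (\ref{B2.11}) directly; instead it forms the trace functional $\Delta[u,v;\lambda]=\mathrm{tr}(F(x,t;\lambda)C(\lambda)\bar{F}(x,t;\lambda))$ from the fundamental solutions of (\ref{B2.16}) and its adjoint (\ref{B2.16a}), observes that $\Delta$ is conserved with respect to both $D_{x}$ and $D_{t}$, passes to its functional gradient $\varphi=\mathrm{grad}\,\Delta$, which by Lemma \ref{Lm_B2.2} satisfies the Noether--Lax equation (\ref{B2.10b}), and only then reads the densities $\sigma_{j}[u,v]$ off the exponent of the WKB-type asymptotic ansatz (\ref{B2.13})--(\ref{B2.14}) for $\varphi$ as $|\lambda|\to\infty$. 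Your Riccati reduction of the pair (\ref{B2.11}) together with the zero-curvature identity is the more classical and more direct route: it produces the conservation law immediately in divergence form $D_{t}\sigma+D_{x}(-\tau)=0$, hence yields the fluxes $\rho_{j}$ explicitly, whereas the paper's argument only asserts their existence and, as written, is partly circular (its proof opens by assuming the conclusion (\ref{B2.12})). What the paper's detour buys is the object it actually needs downstream: the gradient solution $\varphi\in\Lambda^{0}(\mathcal{\bar{K}\{}u,v;D_{x}^{-1}\sigma|N\})^{2}$ of (\ref{B2.10b}) in the form (\ref{B2.13}) is reused verbatim in Proposition \ref{Prop_B3.1} and in the derivation of the recursion operator (\ref{B3.36}). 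One caution on your third step: the jet-triangularity argument for functional independence is not in the paper and is exactly what fails for (\ref{B1.1}), where all $\sigma_{j}$ with $j\geq 2$ collapse into $\mathrm{Im}\,D_{x}$; you correctly flag this dichotomy, and since the proposition only claims a set that is ``either finite or infinite,'' the degenerate regime does not contradict the statement, but independence should be presented as something to be verified case by case rather than as a consequence of the recursion.
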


\begin{proof}
Assume that the Lax type integrable dynamical system \ (\ref{B2.10})
possesses a set (either finite or infinite) of naturally ordered
functionally independent scalar conserved differential quantities \ (\ref%
{B2.12}). Let $\mathcal{\bar{K}\{}u,v;D_{x}^{-1}\sigma |N\}$ denote the
finitely extended differential ring $\mathcal{K\{}u,v;\{D_{x}^{-1}\sigma
_{j}[u,v]:j=\overline{0,N}\}\}$ for arbitrary finite integer $N\in \mathbb{Z}%
_{+}$ \ under\ the constraints (\ref{B2.10}). Then the Lax equation \ (\ref%
{B2.10b}), if considered on the invariant functional submanifold 
\begin{eqnarray}
M_{N} &:&=\{(u,v)^{\intercal }\in M:\text{ \ \ }\mathrm{grad}<c^{(N)},\int dx%
\text{ }\Sigma ^{(N)}>_{\mathbb{C}^{N+1}}=0,  \label{B2.12a} \\
c^{(N)} &\in &\mathbb{C}^{N+1}\backslash \{0\},\Sigma ^{(N)}:=(\sigma
_{0},\sigma _{1},...,\sigma _{N})^{\intercal }\in \Lambda ^{0}(\mathcal{\bar{%
K}\{}u,v\})^{N+1}\},  \notag
\end{eqnarray}%
allows \cite{BPS,PM} as $|\lambda |\rightarrow \infty $ an asymptotic
solution $\varphi (x;\lambda )\in \Lambda ^{0}(\mathcal{\bar{K}\{}%
u,v;D_{x}^{-1}\sigma |N\})^{2}$ \ in the form $\ $ 
\begin{equation}
\varphi (x;\lambda )\ \sim \psi (x,t;\lambda )\ \exp \{\omega (x,t;\lambda
)+D_{x}^{-1}\sigma (x,t;\lambda )\},  \label{B2.13}
\end{equation}%
with a scalar analytical "dispersion" function $\omega (x,t;\cdot ):\mathbb{C%
}\rightarrow \mathbb{C}\ \ $ \ determined for all $(x,t)\in \mathbb{R}\times
\lbrack 0,T),$ and the compatible local functionals expansions $\ $%
\begin{eqnarray}
\Lambda ^{0}(\mathcal{\bar{K}\{}u,v\})^{2} &\ni &\ \sigma (x,t;\lambda )\sim
\sum_{j\in \mathbb{Z}_{+}}\sigma _{j}[u,v]\lambda ^{-j+|\sigma |},\text{ }
\label{B2.14} \\
\text{\ }\Lambda ^{0}(\mathcal{\bar{K}\{}u,v\})^{2} &\ni &\psi (x,t;\lambda
)\sim \sum_{j\in \mathbb{Z}_{+}}\psi _{j}[u,v]\lambda ^{-j+|\psi |}  \notag
\end{eqnarray}%
for some fixed integers $|\sigma |,|\psi |\in \mathbb{Z}_{+}.$ Moreover,
owing to the Lax equation (\ref{B2.10b}), all of the scalar functionals 
\begin{equation}
\gamma _{j}:=\int dx\sigma _{j}[u,v]  \label{B2.15}
\end{equation}%
for $j\in \mathbb{Z}_{+}$ are conserved quantities with respect to the
differentiation $D_{t}.$ Now, vice versa, if the Lax equation (\ref{B2.10b})
possesses an asymptotic as $|\lambda |\rightarrow \infty $ solution in the
form \ (\ref{B2.13}) $\varphi \lbrack u,v;\lambda ]\in \Lambda ^{0}(\mathcal{%
\bar{K}\{}u,v;D_{x}^{-1}\sigma |N\})^{2}$ \ \ with compatible expansions \ (%
\ref{B2.14}), then all of the scalar functionals \ (\ref{B2.15}) are, a
priori, the conserved quantities with respect to the differentiation $D_{t},$
that is there exist such scalar quantities $\rho _{j}[u,v]\in \Lambda ^{0}(%
\mathcal{\bar{K}\{}u,v\}),j\in \mathbb{Z}_{+},$ satisfying the relationships
\ (\ref{B2.12}).
\end{proof}

The analytical expressions for representation \ (\ref{B2.13}) and asymptotic
expansions \ (\ref{B2.14}) for a Lax type integrable dynamical system \ (\ref%
{B2.10}) easily enough follow from the general theory of asymptotic
solutions \cite{CL,Sh} to linear differential equations, applied to a linear
differential system \ (\ref{B2.11}) and from an important fact \cite%
{No,FT,BPS,PM}, that the trace functional $\Delta \lbrack u,v;\lambda
]:=tr(F(x,t;\lambda )C(\lambda )$ $\bar{F}(x,t;\lambda ))\in \Lambda ^{0}(%
\mathcal{\bar{K}\{}u,v;D_{x}^{-1}\sigma |N\})\ $with$\ $any$\ $\ constant
matrix $\ C(\lambda )\in \mathrm{End}$ $\mathbb{C}^{q}$ is for almost all $%
\lambda \in \mathbb{C}$ a conserved quantity with respect to both
differentiations $D_{t}$ and $D_{x},$ where $F(x,t;\lambda )$ and $\bar{F}%
(x,t;\lambda ),$ $(x,t)\in \mathbb{R}\times \mathbb{R}_{+},$ are,
respectively, the fundamental solutions to the linear Lax type equation 
\begin{equation}
D_{x}f(x,t;\lambda )=l[u,v;\lambda ]f(x,t;\lambda )\   \label{B2.16}
\end{equation}%
and its adjoint version 
\begin{equation}
D_{x}\bar{f}(x,t;\lambda )=-\bar{f}(x,t;\lambda )l[u,v;\lambda ],
\label{B2.16a}
\end{equation}%
where $f(x,t;\lambda ),\bar{f}^{\intercal }(x,t;\lambda )\in \Lambda ^{0}(%
\mathcal{\bar{K}\{}u,v;D_{x}^{-1}\sigma |N\})^{q}.$ Thereby, the
corresponding gradient 
\begin{equation}
\mathrm{grad}\Delta \lbrack u,v;\lambda ]:=\varphi \lbrack u,v;\lambda ]\in
\Lambda ^{0}(\mathcal{\bar{K}\{}u,v;D_{x}^{-1}\sigma |N\})^{2},\ 
\label{B2.17}
\end{equation}%
owing to Lemma \ \ref{Lm_B2.2}, a priori satisfies the Lax equation \ (\ref%
{B2.10b}). Having assumed that $|\lambda |\ \rightarrow \infty ,$ from the
asymptotic properties of linear equations \ (\ref{B2.16}) and \ (\ref{B2.16a}%
) one obtains the result of Proposition \ \ref{Prop_B2.3}.

\section{\protect\bigskip The two-component polynomial Burgers type
dynamical system integrability analysis}

Proceed now to analyzing the Lax type integrability of the two-component
polynomial Burgers type dynamical system \ (\ref{B1.1}). To do this, owing
to the approach described above, it is necessary to prove to the Lax
equation \ (\ref{B2.10b}) possesses an asymptotic solution \ of the form (%
\ref{B2.13}) in \ $\Lambda ^{0}(\mathcal{\bar{K}\{}u,v;D_{x}^{-1}\sigma
|N\})^{2}.$ Concerning the dynamical system (\ref{B1.1}) the following
proposition holds.

\begin{proposition}
\label{Prop_B3.1}The Lax equation (\ref{B2.10b}) with the differential
matrix operator%
\begin{equation}
K^{\prime ,\ast }[u,v]=\left( 
\begin{array}{cc}
D_{x}^{2}-2uD_{x} & -vD_{x} \\ 
-D_{x} & -uD_{x}%
\end{array}%
\right) ,  \label{B3.1}
\end{equation}%
possesses the asymptotic as $|\lambda |\rightarrow \infty $ solution \ 
\begin{equation}
\varphi (x;\lambda )=(1,1/\lambda )^{\intercal }g(x;\lambda )\exp [-\lambda
^{2}t-\lambda x+D_{x}^{-1}(u+\lambda ^{-1}v)],  \label{B3.1aa}
\end{equation}%
where the scalar invertible local functional element 
\begin{equation}
g(x;\lambda ):=\exp (-u+\sum\limits_{j\in \mathbb{Z}_{+}\backslash
\{0,1\}}D_{x}^{-1}\sigma _{j}[u,v]/\lambda ^{j})\in \Lambda ^{0}(\mathcal{%
\bar{K}\{}u,v\}).  \label{B3.1bb}
\end{equation}%
The solution \ (\ref{B3.1aa}) corresponds to the local conservative quantity 
$\Delta (\lambda ):=\int dx(u+\lambda ^{-1}v)$ $\in \mathcal{D(}\overline{%
\mathcal{A}};dx)$ in the extended ring $\mathcal{\bar{K}\{}%
u,v;D_{x}^{-1}\sigma |N\}^{2}:$%
\begin{equation}
\mathrm{grad}\Delta (\lambda )[u,v]=\varphi (x;\lambda )\in \Lambda ^{0}(%
\mathcal{\bar{K}\{}u,v;D_{x}^{-1}\sigma |N\})^{2}.  \label{B3.1cc}
\end{equation}%
$.$
\end{proposition}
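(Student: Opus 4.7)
The proof strategy has three parts. First, by direct calculation from (\ref{B1.1}), $u_{t}=D_{x}(u_{x}+u^{2}+v)$ and $v_{t}=D_{x}(uv)$, so both $\int u\,dx$ and $\int v\,dx$ are scalar conserved quantities for $D_{t}$. Consequently, $\Delta(\lambda):=\int(u+\lambda^{-1}v)\,dx\in\mathcal{D}(\overline{\mathcal{A}};dx)$ is a one-parameter family of conserved quantities, and its naive variational gradient $(1,\lambda^{-1})^{\intercal}$ is already a (trivial) solution of the Lax equation (\ref{B2.10b}) with $K^{\prime,\ast}[u,v]$ given by (\ref{B3.1}). By Lemma \ref{Lm_B2.2} and the gradient-holonomic framework of Proposition \ref{Prop_B2.3}, an associated non-trivial asymptotic solution of the form (\ref{B2.13}) must exist in the extended ring $\overline{\mathcal{K}}\{u,v;D_{x}^{-1}\sigma|N\}$.

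Next, I make the WKB-type ansatz $\varphi=(1,\lambda^{-1})^{\intercal}g(x;\lambda)E$ with $E=\exp[-\lambda^{2}t-\lambda x+D_{x}^{-1}(u+\lambda^{-1}v)]$, so that the phase exponent matches $\omega(x,t;\lambda)+D_{x}^{-1}\sigma(x,t;\lambda)$ in (\ref{B2.13}) with dispersion $\omega=-\lambda^{2}t-\lambda x$ and leading density $\sigma\sim u+\lambda^{-1}v$ encoding the two trivial conservation laws. The key computational identities are $D_{x}E=\alpha E$ with $\alpha:=-\lambda+u+\lambda^{-1}v$ and $D_{t}E=(-\lambda^{2}+u_{x}+u^{2}+v+\lambda^{-1}uv)E$, the latter obtained from the conservation form above via $D_{t}D_{x}^{-1}=D_{x}^{-1}D_{t}$. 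Substituting into the two components of (\ref{B2.10b}) and dividing by $E$ yields a pair of linear scalar equations for $g$ whose compatibility reduces to the first-order linear ODE
\begin{equation*}
D_{x}g+\alpha g=c(\lambda),
\end{equation*}
with $c(\lambda)$ a constant (its $t$-independence follows after a short computation exploiting the precise polynomial structure of (\ref{B1.1})).

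Finally, I solve this ODE asymptotically in powers of $\lambda^{-1}$ by setting $g=\exp(h)$ and expanding $h$ order-by-order; after a convenient gauge choice that absorbs a total-derivative piece into the phase, the resulting formal series takes precisely the form (\ref{B3.1bb}), with the coefficients $\sigma_{j}[u,v]$ for $j\geq 2$ determined by a local recursion from $u$, $v$, their $x$-derivatives, and the previously computed $\sigma_{k}$ for $k<j$. The identification $\mathrm{grad}\,\Delta(\lambda)[u,v]=\varphi(x;\lambda)$ in the extended ring then follows because $\varphi$ is the \emph{dressed} gradient of $\Delta(\lambda)$: it is obtained from the constant gradient $(1,\lambda^{-1})^{\intercal}$ by multiplication with the squared-eigenfunction generating series of the Lax pair (\ref{B1.2a}). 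The main obstacle is the careful bookkeeping needed to show that the recursion for $\sigma_{j}[u,v]$ closes within $\overline{\mathcal{K}}\{u,v\}$ (together with the adjoined nonlocal elements $D_{x}^{-1}\sigma_{j}$) and does not force further nonlocal extensions at any order in $\lambda^{-1}$; this follows from the triangular structure of the Lax matrix in (\ref{B1.2a}).
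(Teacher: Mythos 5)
Your proposal follows the same gradient--holonomic route as the paper --- substitute a WKB-type asymptotic ansatz with dispersion $\omega=-\lambda^{2}t-\lambda x$ into the Noether--Lax equation (\ref{B2.10b}) and determine the amplitude order by order in $\lambda^{-1}$ --- but with a different parametrization of the unknowns. The paper takes $\varphi=(1,a(x,t;\lambda))^{\intercal}\exp\{-\lambda^{2}t-\lambda x+D_{x}^{-1}\sigma(x,t;\lambda)\}$ with \emph{both} the second amplitude component $a$ and the full density series $\sigma$ unknown, derives the coupled system (\ref{B3.1c}) and the recursions (\ref{B3.3}), and only then finds $a=\lambda^{-1}$, $\sigma_{0}=u$, $\sigma_{1}=u_{x}+v$, $\sigma_{j}=D_{x}(\cdots)$ for $j\geq2$. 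You fix the vector part $(1,\lambda^{-1})^{\intercal}$ from the outset (legitimately, via the directly verified conservation of $\int u\,dx$ and $\int v\,dx$) and collapse (\ref{B2.10b}) to a scalar problem; your reduction is correct --- the difference of the two component equations gives $D_{x}(D_{x}g+\alpha g)=0$ with $\alpha=-\lambda+u+\lambda^{-1}v$, and the residual $t$-equation $D_{t}g+u_{x}g=(\lambda+u)D_{x}g$ controls the integration constant. Your linear ODE is in effect the linearization (via $\sigma=\lambda+c(\lambda)g^{-1}$) of the Riccati-type relation underlying (\ref{B3.3}), and its triangular structure (the $-\lambda$ in $\alpha$ makes each order algebraic) is what keeps the coefficients of $\ln g$ local. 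What each version buys: yours is computationally lighter and makes locality transparent; the paper's retains the general component $a$, which is what one needs downstream (Theorem \ref{Tm_B3.2}) to conclude that \emph{no other} affinely ordered conserved densities exist --- a restricted ansatz exhibits a solution but cannot rule out others.

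One step of yours is genuinely too thin: the closing claim that $\mathrm{grad}\,\Delta(\lambda)=\varphi$ "because $\varphi$ is the dressed gradient." Since $\mathrm{grad}\,\Delta(\lambda)=(1,\lambda^{-1})^{\intercal}$ exactly, the identification with (\ref{B3.1aa}) is the formal identity $g(x;\lambda)\exp[-\lambda^{2}t-\lambda x+D_{x}^{-1}(u+\lambda^{-1}v)]=1$ in the extended ring --- this is precisely the paper's relation (\ref{B3.7}) --- and it holds only because every higher density is exact, so the dressing factor is trivial modulo $\mathrm{Im}\,D_{x}$. Multiplying a gradient by a nontrivial scalar series destroys the Volterra property in general, so "dressing" is not a reason; you must extract from your recursion the exactness $\sigma_{j}=D_{x}(\cdots)$, $j\geq2$, which it does deliver (the $\sigma_{j}$ are $D_{x}$ of the local coefficients of $\ln g$) but which you never state. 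Finally, your recursion actually yields $\ln g=\lambda^{-1}u+\sum_{j\geq2}\lambda^{-j}D_{x}^{-1}\sigma_{j}$, consistent with (\ref{B3.4}) but not with the sign and $\lambda$-weight of the $u$-term printed in (\ref{B3.1bb}); that discrepancy lies in the paper's formula, not in your computation.
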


\begin{proof}
\bigskip Assume that the Lax equation (\ref{B2.10b}) possesses the
asymptotic as $|\lambda |\rightarrow \infty $ solution\ (\ref{B2.13}), where 
$\omega (x,t;\lambda )=-\lambda x-\lambda ^{2}t,$%
\begin{equation}
\Lambda ^{0}(\mathcal{\bar{K}\{}u,v;D_{x}^{-1}\sigma |N\})^{2}\ni \varphi
(x;\lambda )=\psi (x,t;\lambda )\exp \{-\ \lambda ^{2}t-\lambda
x+D_{x}^{-1}\sigma (x,t;\lambda )\}  \label{B3.1a}
\end{equation}%
and 
\begin{equation}
\Lambda ^{0}(\mathcal{\bar{K}\{}u,v\})^{2}\ni \psi (x,t;\lambda
)=(1,a(x,t;\lambda ))^{\intercal }  \label{B3.1b}
\end{equation}%
which reduces to an equivalent system of the differential-functional
relationships $\ $%
\begin{equation}
\begin{array}{c}
D_{x}^{-1}\sigma _{t}-\lambda ^{2}+\sigma _{x}+(-\lambda +\sigma
)^{2}-(2u+va)(-\lambda +\sigma )-va_{x}=0, \\ 
a_{t}+a(-\lambda ^{2}+D_{x}^{-1}\sigma _{t})-ua_{x}-(au+1)(-\lambda +\sigma
)=0.%
\end{array}
\label{B3.1c}
\end{equation}%
The coefficients of the corresponding asymptotic expansions $\ $ 
\begin{equation}
\begin{array}{c}
\Lambda ^{0}(\mathcal{\bar{K}\{}u,v\})\ni a(x,t;\lambda )\sim \sum_{j\in 
\mathbb{Z}_{+}}a_{j}[u,v]\lambda ^{-j}, \\ 
\Lambda ^{0}(\mathcal{\bar{K}\{}u,v\})\ni \sigma (x,t;\lambda )\sim
\sum_{j\in \mathbb{Z}_{+}}\sigma _{j}[u,v]\lambda ^{-j},%
\end{array}
\label{B3.2}
\end{equation}%
should satisfy two infinite hierarchies of recurrent relationships%
\begin{equation}
\begin{array}{c}
D_{x}^{-1}\sigma _{j-1,t}+\sigma _{j-1,x}+2\sigma _{j\ }+\sum_{k\in \mathbb{Z%
}_{+}}\sigma _{j-1-k}\sigma _{k}-2u\delta _{j-1,-1}-2u\sigma _{j-1}- \\ 
-a_{j\ }v-v\sum_{k\in \mathbb{Z}_{+}}\sigma _{j-1-k}a_{k}-va_{j-1,x}=0, \\ 
a_{j-2,t}-a_{j\ }+\sum_{k\in \mathbb{Z}_{+}}a_{j-2-k}D_{x}^{-1}\sigma
_{k,t}-ua_{j-2,x}-\delta _{j-2,-1}- \\ 
-\sigma _{j-2}-ua_{j-1}-u\sum_{k\in \mathbb{Z}_{+}}\sigma _{j-2-k}a_{k}=0,%
\end{array}
\label{B3.3}
\end{equation}%
\ compatible for all $j\in \mathbb{Z}_{+}.$ It is easy to calculate \ from (%
\ref{B3.3}) the corresponding coefficients $\ $%
\begin{eqnarray}
\sigma _{0} &=&u,\text{ \ \ \ \ \ }\sigma _{1}=\ u_{x}+v,\text{\ \ \ \ }%
\sigma _{2}=v_{x}+u_{xx}+uu_{x},\text{ \ }  \label{B3.4} \\
\sigma _{3} &=&D_{x}(u^{3}/3+\ uv\ +u_{xx}+2uu_{x}+v_{x}),...,\sigma
_{j}=D_{x}(...),...,  \notag \\
a_{0} &=&0,\text{\ \ \ \ \ \ \ }a_{1}=\ 1,\text{ \ \ }a_{2}=0,\text{ }\
a_{3}=0,...,a_{j}=0,...  \notag
\end{eqnarray}%
and to get convinced that only two functionals 
\begin{equation}
\gamma _{0}:=\int dx\sigma _{0}[u,v]=\int dxu,\gamma _{1}:=\int dx\sigma
_{1}[u,v]=\int dxv,  \label{B3.5}
\end{equation}%
are nontrivial conservation laws with respect to the differentiation $D_{t},$
since all other functionals 
\begin{equation}
\gamma _{j}:=\int dx\sigma _{j}[u,v]=\ \int dxD_{x}(...)=0  \label{B3.6}
\end{equation}%
are trivial$\ $in the ring $\mathcal{\bar{K}\{}u,v\}.$ Equivalently, it
means that the gradient $\varphi (x;\lambda ):=\mathrm{grad}\int
dx(u+\lambda ^{-1}v)=(1,\ 1/\lambda )^{\intercal }\in \Lambda ^{0}(\mathcal{%
\bar{K}\{}u,v\})^{2}$ satisfies the Lax equation \ (\ref{B2.10b}) in the
ring $\mathcal{\bar{K}\{}u,v\}$ and thus \ it should coincide with the
expression \ (\ref{B3.1a}). As a result one easily obtains that 
\begin{equation}
(1,1/\lambda )^{\intercal }=(1,1/\lambda )^{\intercal }g(x;\lambda )\exp
[-\lambda ^{2}t-\lambda x+D_{x}^{-1}(u+\lambda ^{-1}v)],  \label{B3.7}
\end{equation}%
where the scalar invertible element 
\begin{equation}
g(x;\lambda ):=\exp (-u+\sum\limits_{j\in \mathbb{Z}_{+}\backslash
\{0,1\}}D_{x}^{-1}\sigma _{j}[u,v]/\lambda ^{j})\in \Lambda ^{0}(\mathcal{%
\bar{K}\{}u,v\}),  \label{B3.8}
\end{equation}%
giving rise to the expressions \ (\ref{B3.1aa}) and \ (\ref{B3.1bb}). The
latter proves the proposition.
\end{proof}

As a consequence of Proposition \ref{Prop_B3.1} we can formulate the
following theorem.

\begin{theorem}
\label{Tm_B3.2} The two-component polynomial Burgers type dynamical system \
(\ref{B1.1}) possesses only two local conserved quantities $\int dxu$ and $%
\int dxv$ and no other infinite affinely ordered conserved quantities
(either local or nonlocal). Moreover, on the functional manifold $M$ the
Burgers type dynamical system \ (\ref{B1.1}) is linearizable by means of a
Hopf-Cole type transformation and the related linear adjoint mapping to the
matrix Lax type representation%
\begin{equation}
D_{x}\left( 
\begin{array}{c}
f \\ 
\hat{f}%
\end{array}%
\right) =\left[ 
\begin{array}{cc}
(u+\lambda ^{-1}v-\lambda )/2 & 0 \\ 
1 & (\lambda -\lambda ^{-1}v-u\ )/2]%
\end{array}%
\right] \left( 
\begin{array}{c}
f \\ 
\hat{f}%
\end{array}%
\right) ,\text{\ \ \ }  \label{B3.9a}
\end{equation}%
and%
\begin{equation}
\left( 
\begin{array}{c}
f \\ 
\hat{f}%
\end{array}%
\right) =\left[ 
\begin{array}{cc}
\begin{array}{c}
D_{x}(u+\lambda )]/2+ \\ 
+(u+\lambda )(u+\lambda ^{-1}v-\lambda )/2%
\end{array}
& 0 \\ 
(u+\lambda ) & 
\begin{array}{c}
-D_{x}(u+\lambda )]/2+ \\ 
+(u+\lambda )(\lambda -u-\lambda ^{-1}v)/2%
\end{array}%
\end{array}%
\right] \left( 
\begin{array}{c}
f \\ 
\hat{f}%
\end{array}%
\right) ,  \label{B3.9b}
\end{equation}%
compatible for all $\lambda \in \mathbb{C}\backslash \{0\},$ where vector
function $(f,\hat{f})^{\intercal }\in \Lambda ^{0}(\mathcal{\bar{K}\{}%
u,v;D_{x}^{-1}\sigma |N\})^{2},\ $ $N=$ $2.$
\end{theorem}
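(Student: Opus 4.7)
The plan is to dispatch the three substantive claims of the theorem in turn, drawing on the machinery assembled in Section~2 and on Proposition~\ref{Prop_B3.1}.

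First, the counting of conserved quantities is a direct consequence of Proposition~\ref{Prop_B3.1} combined with Proposition~\ref{Prop_B2.3}. The asymptotic solution~(\ref{B3.1aa}) of the Lax equation~(\ref{B2.10b}) produces the densities $\sigma_j[u,v]$ of the expansion~(\ref{B3.2}); the explicit computation~(\ref{B3.4}) gives $\sigma_0=u$, $\sigma_1=u_x+v$ as the only nontrivial densities, since $\sigma_j\in\mathrm{Im}\,D_x$ for every $j\geq 2$. By Proposition~\ref{Prop_B2.3}, every conserved density arising from the gradient-holonomic scheme is, modulo $\mathrm{Im}\,D_x$, a $\mathbb{C}$-linear combination of these $\sigma_j$, so only $\int dx\,u$ and $\int dx\,v$ survive. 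The exclusion of further \emph{affinely ordered} conserved quantities, local or nonlocal, reduces to uniqueness of the asymptotic expansion of the fundamental matrix of~(\ref{B2.11}) at $|\lambda|=\infty$, which follows from unique triangularisation of the principal symbol of~(\ref{B3.1}) together with invertibility of $g(x;\lambda)$ inside $\mathcal{\bar K}\{u,v;D_x^{-1}\sigma|N\}$.

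For the matrix Lax pair~(\ref{B3.9a})--(\ref{B3.9b}), I would extract from the asymptotic solution~(\ref{B3.1aa}) the scalar half-exponential
\[
f:=\bigl(g(x;\lambda)\exp[-\lambda^2 t-\lambda x+D_x^{-1}(u+\lambda^{-1}v)]\bigr)^{1/2},
\]
which by direct differentiation satisfies $D_x f=\tfrac12(u+\lambda^{-1}v-\lambda)f$, the top-row relation of~(\ref{B3.9a}). The companion variable $\hat f$ is then introduced through the linear adjoint construction based on~(\ref{B2.16a}), with its first-order equation fixed so that the pair $(f,\hat f)^{\intercal}$ produces, via the trace-functional mechanism described around~(\ref{B2.16a}), the gradient density $\mathrm{grad}\,\Delta(\lambda)=(1,\lambda^{-1})^{\intercal}$ of~(\ref{B3.1cc}); this forces the second row of~(\ref{B3.9a}). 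A parallel computation in $t$, using the Burgers constraint~(\ref{B1.1}) to eliminate $u_t$ and $v_t$, yields the temporal matrix~(\ref{B3.9b}). The zero-curvature condition $D_tL-D_xP+[L,P]=0$ is then checked entry by entry: the $(2,1)$-component reduces to the trivial identity $u_x=u_x$, while expansion of the $(1,1)$-component in powers of $\lambda$ produces at order $\lambda^{0}$ the first Burgers equation $u_t=u_{xx}+2uu_x+v_x$ and at order $\lambda^{-1}$ the second, $v_t=(uv)_x$. The Hopf--Cole linearisation~(\ref{B1.3}) is then recovered through the substitution $\tilde g:=f^{2}\exp[\lambda^{2}t-D_x^{-1}(\lambda^{-1}v-\lambda)]$ indicated in the introduction, giving the stated Hopf--Cole/adjoint-mapping linearisation.

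The main technical obstacle is organising the zero-curvature computation so that the Laurent expansion in $\lambda$, which carries tails at both $\lambda^{+1}$ and $\lambda^{-1}$, delivers exactly the two components of $K[u,v]$ and nothing more. A subtler obstacle, underlying the nonlocal part of the uniqueness claim in the first assertion, is ruling out competing WKB branches of~(\ref{B2.10b}) with alternative leading exponential behaviour; this reduces to analysing the characteristic polynomial of the principal symbol of $K^{\prime,\ast}[u,v]$ in~(\ref{B3.1}) and showing that it admits only the eigenvalue leading to~(\ref{B3.1aa}).
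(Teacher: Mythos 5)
Your overall strategy is viable and in places cleaner than the paper's: the paper derives the Lax pair by substituting the Hopf--Cole ansatz $u=D_{x}\ln\tilde{g}$ into (\ref{B1.1}) to obtain the linear system $D_{t}\tilde{g}=D_{x}^{2}\tilde{g}+v\tilde{g}$, $D_{x}\tilde{g}=u\tilde{g}$, passes to $\tilde{f}=\tilde{g}^{1/2}$ and a $\lambda$-gauge, constructs $\hat{f}$ from the ansatz $D_{x}(\hat{f}f)=\chi\lbrack u,v;\lambda]f^{2}$ with compatibility forcing $\chi=1$, and then verifies the pair through the relations $D_{x}(\hat{f}f)=f^{2}$, $D_{t}(\hat{f}f)=(u+\lambda)f^{2}$. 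Your direct entrywise zero-curvature check is a legitimate substitute for that last step, and your reading of it is correct: the $(2,1)$-entry is an identity and the $(1,1)$-entry, expanded at orders $\lambda^{0}$ and $\lambda^{-1}$, returns exactly $u_{t}=u_{xx}+2uu_{x}+v_{x}$ and $v_{t}=(uv)_{x}$. The treatment of the conserved-quantity count via Propositions \ref{Prop_B2.3} and \ref{Prop_B3.1} also matches the paper, which in fact settles that part entirely inside the proof of Proposition \ref{Prop_B3.1}.

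There is, however, a concrete gap in your construction of $f$. You define $f$ as the square root of the full asymptotic solution $g(x;\lambda)\exp[-\lambda^{2}t-\lambda x+D_{x}^{-1}(u+\lambda^{-1}v)]$ and assert that direct differentiation gives $D_{x}f=\tfrac12(u+\lambda^{-1}v-\lambda)f$. This fails: the factor $g(x;\lambda)$ of (\ref{B3.1bb}) depends on $x$ (its logarithmic derivative is $-u_{x}+\sum_{j\geq 2}\lambda^{-j}\sigma_{j}[u,v]\neq 0$), so the claimed first-order relation acquires extra terms; worse, by the identity (\ref{B3.7}) the quantity under your square root is identically $1$, so your $f$ is constant and satisfies no such equation. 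The correct object omits the $g$-factor: one must take $f$ with $\ln f=\tfrac12 D_{x}^{-1}(u+\lambda^{-1}v-\lambda)$ (equivalently $\tilde{f}=\tilde{g}^{1/2}$ with $\tilde{g}=\exp(D_{x}^{-1}u)$, followed by the $\lambda$-dependent gauge), which is exactly what the paper's chain (\ref{B3.10})--(\ref{B3.12c}) produces. Relatedly, your claim that the second row of (\ref{B3.9a}) is ``forced'' by requiring the pair to reproduce the gradient $(1,\lambda^{-1})^{\intercal}$ is too vague to pin down the off-diagonal entry $1$; in the paper this normalization comes from compatibility of the adjoint ansatz $D_{x}(\hat{f}f)=\chi f^{2}$ with the temporal flow, which fixes $\chi=1$. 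With the definition of $f$ repaired and the normalization of $\hat{f}$ argued via that compatibility, the rest of your argument goes through.
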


\begin{proof}
Based on Proposition \ref{Prop_B3.1} and recent results of \cite{POS}, one
can apply to the two component polynomial Burgers type dynamical system \ (%
\ref{B1.1}) the following generalized Hopf-Cole type linearizing
transformation:%
\begin{equation}
\mathcal{\ }u:=D_{x}\ln \tilde{g}(x;\lambda ),  \label{B3.10}
\end{equation}%
where have put 
\begin{equation}
\tilde{g}(x;\lambda ):=g(x;\lambda )^{-1\ }\exp (\lambda ^{2}t\ +\lambda x\
-\ \lambda ^{-1}D_{x}^{-1}v)\in \Lambda ^{0}(\mathcal{\bar{K}\{}%
u,v;D_{x}^{-1}\sigma |N\})  \label{B3.11}
\end{equation}%
and, by construction, $\ $it should be set $N=2.$ Having substituted \ (\ref%
{B3.10}) into \ (\ref{B1.1}), one finds easily the following system of
linear equations%
\begin{equation}
D_{t}\tilde{g}=D_{x}^{2}\tilde{g}+v\tilde{g},\text{ \ \ \ \ \ \ \ \ \ }D_{x}%
\tilde{g}=\ u\ \tilde{g},  \label{B3.12a}
\end{equation}%
which easily reduces to the following system of differential relationships: 
\begin{equation}
D_{t}\tilde{f}=\ (u_{x}+u^{2}\ +v)\ \tilde{f}/2=0,\ \ D_{x}\tilde{f}=\
(u/2)\ \tilde{f},  \label{B3.12b}
\end{equation}%
\ if to make the change of variables $\tilde{f}:=$ $\tilde{g}^{1/2}\in
\Lambda ^{0}(\mathcal{\bar{K}\{}u,v;D_{x}^{-1}\sigma |N\}).$ The system \ (%
\ref{B3.12b}) can be specified further, if to make use of the substitution $%
\tilde{f}:=$ $f\exp (\lambda x-\lambda ^{-1}D_{x}^{-1}v),$ \ giving rise to
the next scalar operator Lax type representation 
\begin{equation}
D_{t}f=\ \ D_{x}(u+\lambda )\ f/2\ +\ (u\ +\lambda )D_{x}f,\text{ \ \ }%
D_{x}f=\ (u+\lambda ^{-1}v-\lambda )\ f/2,  \label{B3.12c}
\end{equation}%
compatible for all $\lambda \in \mathbb{C}\backslash \{0\}.$

Now we will proceed to constructing a suitably linearly extended adjoint
differential relationships \cite{PrY} \ for the system of equations \ (\ref%
{B3.12c}). Doing the standard way, one easily obtains that the following
linearly adjoint relationship compatible with the second equation of \ (\ref%
{B3.12c}) 
\begin{subequations}
\begin{equation}
\begin{array}{c}
D_{x}\hat{f}=D_{x}(\frac{\hat{f}\text{ }f}{f}\ )=-f^{-1}\hat{f}D_{x}f\text{
\ }+\text{ }f^{-1}D_{x}(\hat{f}\text{ }f)= \\ 
\\ 
=-[(\lambda -\lambda ^{-1}v-u\ )/2]\hat{f}+f^{-1}D_{x}(\hat{f}\text{ }%
f)=-[(\lambda -\lambda ^{-1}v-u\ )/2]\hat{f}+\chi \lbrack u,v;\lambda ]f,%
\end{array}
\label{B3.13a}
\end{equation}%
holds, \ where we have put, by definition, that $\ D_{x}(\hat{f}$ $f):=\chi
\lbrack u,v;\lambda ]f^{2}$ for some arbitrarily chosen element $\ \chi
\lbrack u,v;\lambda ]\in $ $\mathcal{\bar{K}\{}u,v\}.$ The compatibility of
\ (\ref{B3.13a}) with the first equation of \ (\ref{B3.12c}) and its
suitable extension gives rise to $\ $the \ condition$\ \chi \lbrack
u,v;\lambda ]=1.\ $\ Thus, \ we have obtained that the linearly adjoint
relationship compatible with the second equation of \ (\ref{B3.12c}) reads
as 
\end{subequations}
\begin{equation}
D_{x}\hat{f}=-\ (\lambda -\lambda ^{-1}v-u\ )\ \hat{f}/2+f.  \label{B3.13b}
\end{equation}%
The respectively adjoint linear relationship compatible with the first
equation of \ (\ref{B3.12c}) obtains easily as 
\begin{equation}
D_{t}\hat{f}=-\ D_{x}(u+\lambda )\hat{f}/2+\ (u\ +\lambda )D_{x}\hat{f},%
\text{ \ \ }  \label{B3.13c}
\end{equation}%
and is compatible with \ (\ref{B3.13b}) for all $\lambda \in \mathbb{C}%
\backslash \{0\}.$ It is now dexterous to rewrite \ equations (\ref{B3.12c}%
), \ (\ref{B3.13b}) and \ (\ref{B3.13c}) as the following two equivalent
matrix systems: 
\begin{equation}
D_{x}\left( 
\begin{array}{c}
f \\ 
\hat{f}%
\end{array}%
\right) =\left( 
\begin{array}{cc}
(u+\lambda ^{-1}v-\lambda )/2 & 0 \\ 
1 & (\lambda -\lambda ^{-1}v-u\ )/2]%
\end{array}%
\right) \left( 
\begin{array}{c}
f \\ 
\hat{f}%
\end{array}%
\right) ,\text{ \ \ \ }  \label{B3.14a}
\end{equation}%
and

\begin{equation}
D_{t}\left( 
\begin{array}{c}
f \\ 
\hat{f}%
\end{array}%
\right) =\left[ \left( 
\begin{array}{cc}
\begin{array}{c}
D_{x}(u+\lambda )]/2+ \\ 
+(u+\lambda )(u+\lambda ^{-1}v-\lambda )/2%
\end{array}
& 0 \\ 
(u+\lambda ) & 
\begin{array}{c}
-D_{x}(u+\lambda )]/2+ \\ 
+(u+\lambda )(\lambda -u-\lambda ^{-1}v)/2%
\end{array}%
\end{array}%
\right) \right] \left( 
\begin{array}{c}
f \\ 
\hat{f}%
\end{array}%
\right) ,\text{ \ \ \ }  \label{B3.14b}
\end{equation}%
where, by construction, elements $(f,\hat{f})^{\intercal }\in \Lambda ^{0}(%
\mathcal{\bar{K}\{}u,v;D_{x}^{-1}\sigma |N\})^{2},\ $ $N=$ $2.$ Based on the
systems (\ref{B3.14a}) and (\ref{B3.14b}) one can easily calculate that 
\begin{equation}
D_{x}(\hat{f}\ f)=f^{2},\text{ \ \ \ }D_{t}(\hat{f}\text{ }f)=(u+\lambda
)f^{2},  \label{B3.15}
\end{equation}%
Since the mutual compatibility condition of relationships \ (\ref{B3.15})
reduces to the expression 
\begin{equation}
D_{t}f=[D_{x}(u+\lambda )/2]f+(u+\lambda )D_{x}f,  \label{B3.16}
\end{equation}%
exactly coinciding with the first equation of the system \ (\ref{B3.14b}), \
we now can interpret both systems (\ref{B3.14a}) and (\ref{B3.14b}) as the
corresponding matrix Lax type representation for the Burgers type system \ (%
\ref{B1.1}). This proves the theorem.
\end{proof}

\bigskip\ 

The scalar representation \ (\ref{B3.12a}), as it can be easily observed,
can be generalized to the following higher-order evolution equation:%
\begin{equation}
D_{t}\tilde{g}=D_{x}^{n}\tilde{g}+v\tilde{g}=0,\text{ \ \ }D_{x}\tilde{g}=\
u\ \tilde{g}\ ,  \label{B3.17}
\end{equation}%
where $n\in \mathbb{N}\backslash \{1,2\},\tilde{g}\in \Lambda ^{0}(\mathcal{%
\bar{K}\{}u,v;D_{x}^{-1}\sigma |N\})$ and there is imposed no \textit{a
priori} constraint on the function $v\in \mathcal{\bar{K}\{}u,v\}$ except
the functional $\int dxv\in \mathcal{D(}\overline{\mathcal{A}};dx)$ has to
be a conserved quantity with respect to the differentiation $D_{t}.$
Applying to \ (\ref{B3.17}) the nonlocal change of variables $u:=2D_{x}\ln 
\tilde{f}[u,v;\lambda ]$ for $\tilde{f}\in \Lambda ^{0}(\mathcal{\bar{K}\{}%
u,v;D_{x}^{-1}\sigma |N\}),$ $N=2,$ one can obtain a new infinite hierarchy
of two-component integrable polynomial Burgers type dynamical systems,
generalizing the systems studied before in \ \cite{Ta,POS}. \ For instance,
at $n=3$ \ we find the following dynamical Burgers-Korteweg-de Vries type
dynamical system of the third order:%
\begin{eqnarray}
\ u_{t} &=&\ u_{3x}+3\ (uu_{x})_{x}+\ 3u^{2}u_{x}+v_{x},  \label{B3.18} \\
v_{t} &=&(u\ r[u,v])_{x},  \notag
\end{eqnarray}%
where $r[u,v]\in \mathcal{\bar{K}\{}u,v\}$ is for the present an arbitrary
element. \ To choose from them those for which the dynamical systems of type
\ (\ref{B3.18}) will possess suitably extended matrix Lax type
representations, it is natural to take the first pair of Lax type equation \
(\ref{B3.9a}) 
\begin{eqnarray}
D_{x}f &=&\ (u+\lambda ^{-1}v-\lambda )\ f/2,\text{ \ \ \ }  \label{B3.19a}
\\
&&  \notag \\
\text{\ }D_{x}\hat{f} &=&-\ (u+\lambda ^{-1}v-\lambda )\ \hat{f}/2+f  \notag
\end{eqnarray}%
for $(f,\hat{f})^{\intercal }\in \Lambda ^{0}(\mathcal{\bar{K}\{}%
u,v;D_{x}^{-1}\sigma |N\})^{2},\ $ $N=$ $2,$ \ and to supplement it by means
of the following systems of evolution equations, naturally generalizing \
that of (\ref{B3.14b}) with respect to the temporal parameters $t_{n}\in 
\mathbb{R}$ $:$%
\begin{eqnarray}
D_{t_{n}}f &=&\ D_{x}\alpha _{n}(x;\lambda )\text{ }f/2+\ \alpha
_{n}(x;\lambda )D_{x}f,  \label{B3.19b} \\
&&  \notag \\
D_{t_{n}}\hat{f} &=&-\ D_{x}\alpha _{n}(x;\lambda )\text{ }\hat{f}/2\ +\
\alpha _{n}(x;\lambda )D_{x}\hat{f},  \notag
\end{eqnarray}%
which \ are, by construction, compatible for all $\lambda \in \mathbb{C}%
\backslash \{0\}$ for a polynomial in $\lambda \in \mathbb{C}$ element $%
\alpha _{n}(x;\lambda )\in \mathcal{\bar{K}\{}u,v\},$ $n\in \mathbb{N},$
satisfying the standard determining \ relationship 
\begin{equation}
D_{t_{n}}(u+\lambda ^{-1}v-\lambda )\ =D_{x}[D_{x}^{\ \ }\alpha
_{n}(x;\lambda )\ +\ \ (u+\lambda ^{-1}v-\lambda )\alpha _{n}(x;\lambda )].
\label{B3.20}
\end{equation}%
It is also easy to check that for $n=1$ the choice 
\begin{equation}
\alpha _{1}(x;\lambda )=u+\lambda \text{ \ \ }  \label{B3.21}
\end{equation}%
entails exactly the Burgers type dynamical system \ (\ref{B1.1}).

The general algebraic structure of the whole infinite hierarchy of resulting
dynamical systems \ (\ref{B3.20}) can be extracted easily from the matrix
spectral Lax pair \ (\ref{B3.14a}) 
\begin{equation}
D_{x}\left( 
\begin{array}{c}
f \\ 
\hat{f}%
\end{array}%
\right) =\left( 
\begin{array}{cc}
(u+\lambda ^{-1}v-\lambda )/2 & 0 \\ 
1 & (\lambda -u-\lambda ^{-1}v)/2%
\end{array}%
\right) \left( 
\begin{array}{c}
f \\ 
\hat{f}%
\end{array}%
\right) :=l[u,v;\lambda ]\left( 
\begin{array}{c}
f \\ 
\hat{f}%
\end{array}%
\right) ,  \label{B3.22}
\end{equation}%
which allows by means of the gradient-holonomic scheme \cite{PM,BPS} to \
obtain successfully from the corresponding differential commutator equation 
\begin{equation}
D_{x}S=[l,S],\text{ \ \ \ }S=\left( 
\begin{array}{cc}
S_{11} & S_{12} \\ 
S_{21} & \ S_{22}%
\end{array}%
\right) ,  \label{B3.23}
\end{equation}%
for the related "monodromy" matrix $S:=S(x;\lambda )\in sl(2;\mathbb{C)}$ \
the resulting \cite{No,PM} canonical differential relationships \ for the
gradient $(\varphi _{1},\varphi _{2})^{\intercal }:=\varphi :=$\textrm{grad}$%
(\mathrm{tr}S)\in \Lambda ^{0}(\mathcal{\bar{K}\{}u,v;D_{x}^{-1}\sigma
|N\})^{2}$ of the dynamical $D_{x}$ and $D_{t}$-invariant trace functional $%
\mathrm{tr}$ $S(x;\lambda )\in \mathcal{D(}\overline{\mathcal{A}};dx):$%
\begin{equation}
\left\{ 
\begin{array}{c}
-D_{x}\varphi _{1}\ +\ D_{x}^{-1}uD_{x}\varphi _{1}+\
D_{x}^{-1}vD_{x}\varphi _{2}=\lambda \varphi _{1}, \\ 
\varphi _{1}=\lambda \varphi _{2},%
\end{array}%
\right.   \label{B3.34}
\end{equation}%
where the component $\varphi _{1}\in \mathcal{\bar{K}\{}u,v\}$ possesses,
owing to the construction, the following differential-algebraic
representation: 
\begin{equation}
\varphi _{1}=\ (u+\lambda ^{-1}v-\lambda )S_{21}+D_{x}S_{21}  \label{B3.35}
\end{equation}%
for some polynomial expression $S_{21}:=S_{21}(x;\lambda )\in \mathcal{\bar{K%
}\{}u,v\}.$ The differential expressions \ (\ref{B3.34}) can be rewritten in
the following useful matrix form:%
\begin{equation}
\Lambda \varphi =\lambda \varphi ,\text{ \ \ }\Lambda :=\left( 
\begin{array}{cc}
-D_{x}+D_{x}^{-1}uD_{x} & D_{x}^{-1}vD_{x} \\ 
1 & 0%
\end{array}%
\right) ,\text{\ }  \label{B3.36}
\end{equation}%
where the recursion operator $\Lambda :T^{\ast }(\mathcal{\bar{K}\{}%
u,v\})\rightarrow T(\mathcal{\bar{K}\{}u,v\})$ satisfies the determining
operator equation 
\begin{equation}
D_{t}\Lambda =[\Lambda ,K^{\prime ,\ast }],  \label{B3.37}
\end{equation}%
easily following from the Noether-Lax condition \ (\ref{B2.10b}) and the
adjoint linear spectral relationship \ (\ref{B3.36}).

Recall now that our Burgers type dynamical system \ (\ref{B1.1}) possesses
only two conservations laws: $\gamma _{0}=\int dxu$ and $\gamma _{1}=\int
dxv\in $ $\mathcal{D(}\overline{\mathcal{A}};dx).$ This means that the
expression \ (\ref{B3.35}) exactly equals $\varphi _{1}=$\textrm{grad}$%
_{u}\gamma _{0}[u,v]=1,$ or equivalently the condition 
\begin{equation}
\ D_{x}[D_{x}S_{21}(x;\lambda )+\ (u+\lambda ^{-1}v-\lambda
)S_{21}(x;\lambda )]=0  \label{B3.37a}
\end{equation}%
should be satisfied for some element $S_{21}(x;\lambda )\in \Lambda ^{0}(%
\mathcal{\bar{K}\{}u,v\})\ $\ and all $\lambda \in \mathbb{C}\backslash
\{0\}.$ The following proposition characterizes asymptotic as $|\lambda
|\rightarrow \infty \ \ $solutions to \ (\ref{B3.37a}) and their
relationships to the generalized dynamical systems \ (\ref{B3.20}).

\begin{proposition}
The nonnegative degree polynomial part of the asymptotic, as $|\lambda
|\rightarrow \infty ,$ solution $S_{21}(x;\lambda )\sim \sum_{j\in \mathbb{Z}%
_{+}}\lambda ^{-j}S_{21}^{(j)}[u,v;\lambda ]$ \ to the differential
relationship \ (\ref{B3.37a}) makes it possible to represent the generating
elements $\alpha _{n}(x;\lambda )\in \Lambda ^{0}(\mathcal{\bar{K}\{}u,v\})$
\ of the generalized dynamical systems \ (\ref{B3.20}) as 
\begin{equation}
\alpha _{n}(x;\lambda )=\ (\lambda ^{n}S_{21}(x;\lambda ))_{+}\ \ 
\label{B3.37b}
\end{equation}%
for any $n\in \mathbb{Z}_{+}.$
\end{proposition}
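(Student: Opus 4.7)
The plan is to exploit the linearity of equation (\ref{B3.37a}) in $S_{21}$, treating $\lambda$ as a formal parameter, via a polynomial/negative-part splitting with respect to powers of $\lambda$. First I would note that (\ref{B3.37a}) coincides with (\ref{B1.2e}); writing $\mathcal{L}[u,v;\lambda]h := D_x^2 h + D_x((u+\lambda^{-1}v-\lambda)h)$, the asymptotic expansion $S_{21}(x;\lambda)\sim\sum_{j\in\mathbb{Z}_+}\lambda^{-j}S_{21}^{(j)}[u,v]$ satisfies $\mathcal{L}[u,v;\lambda]S_{21}=0$ as a formal element of the Laurent series ring $\mathcal{\bar{K}}\{u,v\}((\lambda^{-1}))$. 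Since $\lambda^n$ is constant with respect to $D_x$, it commutes through $\mathcal{L}[u,v;\lambda]$, so that $\mathcal{L}[u,v;\lambda](\lambda^n S_{21}) = \lambda^n \mathcal{L}[u,v;\lambda]S_{21} = 0$ for every $n\in\mathbb{Z}_+$.

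Next I would split $\lambda^n S_{21} = \alpha_n + R_n$, where $\alpha_n := (\lambda^n S_{21})_+ = \sum_{j=0}^n \lambda^{n-j}S_{21}^{(j)}$ is the Laurent polynomial part in $\lambda$ and $R_n := (\lambda^n S_{21})_- = \sum_{k\ge 1}\lambda^{-k}S_{21}^{(n+k)}$ is the strictly negative tail; by linearity,
\begin{equation*}
\mathcal{L}[u,v;\lambda]\alpha_n \;=\; -\,\mathcal{L}[u,v;\lambda]R_n .
\end{equation*}
A direct degree count in $\lambda$ then shows that the left-hand side lies in $\mathcal{\bar{K}}\{u,v\}[\lambda,\lambda^{-1}]$ with degrees ranging from $\lambda^{-1}$ (produced by the $\lambda^{-1}v$ term acting on the constant part of $\alpha_n$) up to $\lambda^{n+1}$ (produced by $-\lambda$ acting on the leading $\lambda^n$ coefficient), while the right-hand side lies in $\mathcal{\bar{K}}\{u,v\}[[\lambda^{-1}]]$ with top degree at most $\lambda^0$. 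Matching powers of $\lambda$ forces every positive-degree coefficient of $\mathcal{L}[u,v;\lambda]\alpha_n$ to vanish, leaving
\begin{equation*}
\mathcal{L}[u,v;\lambda]\alpha_n \;=\; A_0[u,v] \,+\, \lambda^{-1}A_{-1}[u,v]
\end{equation*}
for some $A_0,A_{-1}\in\mathcal{\bar{K}}\{u,v\}$. Setting $D_{t_n}u := A_0$ and $D_{t_n}v := A_{-1}$, and using that $D_{t_n}\lambda = 0$, this becomes exactly the determining relation (\ref{B3.20}). Consistency with the base case $n=1$ is verified by unwinding the recursion produced by $\mathcal{L}[u,v;\lambda]S_{21}=0$ at the two lowest orders (with integration constants normalised so as to match $\varphi_1 = \mathrm{grad}\int dx\, u = 1$), which yields $S_{21}^{(0)}=1$, $S_{21}^{(1)}=u$ and therefore $\alpha_1 = (\lambda S_{21})_+ = \lambda+u$, in agreement with (\ref{B3.21}) and recovering the original Burgers system (\ref{B1.1}).

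The principal obstacle I anticipate is ensuring the rigour of the degree-matching step: one must justify handling $\lambda^n S_{21}$ as a bona fide element of $\mathcal{\bar{K}}\{u,v\}((\lambda^{-1}))$, verify that the recursion produced by $\mathcal{L}[u,v;\lambda]S_{21}=0$ at each order in $\lambda^{-1}$ determines $S_{21}^{(j)}$ uniquely up to the integration constants absorbed into the normalisation, and confirm that the $n+1$ vanishing conditions imposed on the positive-degree coefficients of $\mathcal{L}[u,v;\lambda]\alpha_n$ coincide with the recursion relations already satisfied by $S_{21}^{(0)},\ldots,S_{21}^{(n)}$. Once this algebraic set-up is secured, the remaining manipulations --- splitting, linearity, and identification of $D_{t_n}u,\,D_{t_n}v$ --- are essentially formal.
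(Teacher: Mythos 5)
Your argument is correct, but it proceeds by a genuinely different (and more explicit) route than the paper. The paper works top--down from the recursion operator: it starts from the representation $D_{t_{n}}(u,v)^{\intercal }=\Phi ^{n}(D_{x}u,D_{x}v)^{\intercal }$ in (\ref{B3.38}), identifies the generating function of symmetries (\ref{B3.42}) as a formal eigenfunction of $\Phi $ via (\ref{B3.41}), notes that this eigenvalue relation is equivalent to the differential functional equation (\ref{B3.40}), and then simply \emph{asserts} that the hierarchy can be rewritten as (\ref{B3.39}) with $\alpha _{n}=(\lambda ^{n}\alpha )_{+}$; the proposition follows because (\ref{B3.40}) coincides with (\ref{B3.37a}). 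You instead work bottom--up from the linear equation (\ref{B3.37a}) itself: multiplying by $\lambda ^{n}$, splitting into polynomial and strictly negative parts, and matching degrees in $\lambda $ to conclude that $\mathcal{L}[u,v;\lambda ]\alpha _{n}$ survives only at orders $\lambda ^{0}$ and $\lambda ^{-1}$, which is precisely the determining relation (\ref{B3.20}). What your approach buys is a self-contained justification of exactly the truncation step the paper leaves unproved; what the paper's approach buys is the explicit identification of the resulting flows with the powers $\Phi ^{n}(D_{x}u,D_{x}v)^{\intercal }$ of the recursion operator, which you verify only at $n=1$. If you want full equivalence with the paper's statement of the hierarchy, you should add the observation that (\ref{B3.37a}) is equivalent to the adjoint spectral relation (\ref{B3.36}) (equivalently (\ref{B3.41})), so that the coefficients $S_{21}^{(j)}$ reproduce $\Phi ^{j}(D_{x}u,D_{x}v)^{\intercal }$ recursively; note also that both treatments quietly pass over the fact that the generating function of symmetries satisfies the eigenvalue relation only modulo its leading term, a point worth flagging at the formal-series level of rigour adopted here.
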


\begin{proof}
Taking into account that the whole hierarchy of the generalized Burgers type
dynamical systems \ (\ref{B3.20}) can be represented \ in the recursive form 
\begin{equation}
D_{t_{n}}(u,v)^{\intercal }=\Phi ^{n}(D_{x}u,D_{x}v)^{\intercal },\text{ \ \ 
}\Phi :=\Lambda ^{\ast }=\left( 
\begin{array}{cc}
\ D_{x}+D_{x}^{\ }uD_{x}^{-1} & \text{ \ \ }1 \\ 
D_{x}^{\ }vD_{x}^{-1} & \text{ \ \ \ }0%
\end{array}%
\right) ,  \label{B3.38}
\end{equation}%
we can rewrite it equivalently as 
\begin{equation}
D_{t_{n}}(u+\lambda ^{-1}v-\lambda )=D_{x}[D_{x}^{\ \ }\alpha _{n}(x;\lambda
)\ \ +\ (u+\lambda ^{-1}v-\lambda )\alpha _{n}(x;\lambda )],  \label{B3.39}
\end{equation}%
where, by definition, $\ $%
\begin{equation}
\alpha _{n}(x;\lambda ):=(\lambda ^{n}\alpha (x;\lambda ))_{+}
\label{B3.39a}
\end{equation}%
is the corresponding nonnegative degree polynomial part generated by the
asymptotic solution $\ \alpha (x;\lambda )\sim \sum_{j\in \mathbb{Z}%
_{+}}\lambda ^{-j}\alpha _{j}[u,v;\lambda ]$ as $|\lambda |\rightarrow
\infty $ to the differential functional equation 
\begin{equation}
D_{x}^{2}\alpha (x;\lambda )+D_{x}((u+\lambda ^{-1}v-\lambda )\alpha
(x;\lambda ))=0,  \label{B3.40}
\end{equation}%
exactly equivalent to the dual to \ (\ref{B3.36}) symmetry relationship \ 
\begin{equation}
\Phi \text{ }(D_{x}\alpha ,D_{x}\beta )^{\intercal }=\lambda (D_{x}\alpha
,D_{x}\beta )^{\intercal }  \label{B3.41}
\end{equation}%
with the generalized symmetry of the flow \ (\ref{B1.1}) 
\begin{equation}
(D_{x}\alpha ,D_{x}\beta )^{\intercal }:=\sum_{j\in \mathbb{Z}_{+}}\lambda
^{-j}\Phi ^{j}(D_{x}u,D_{x}v)^{\intercal }.  \label{B3.42}
\end{equation}%
The observation that the differential functional equation \ \ (\ref{B3.40})
coincides exactly with that of \ (\ref{B3.37a}) proves the proposition.
\end{proof}

From Theorem \ref{Tm_B1.1} we also can derive that the Burgers type
dynamical system \ (\ref{B1.1}) does not allow on the functional manifold $M$
a Hamiltonian formulation. This means that the recursion operator \ (\ref%
{B3.36}) constructed above and found before in \cite{Fo,Ma} for the
two-component Burgers type dynamical system \ (\ref{B1.1}), proves to be not
factorizable by means of suitably defined compatible Poissonian structures,
as they on the whole, eventually do not exist. It is strongly related with
the fact that the dynamical system \ (\ref{B1.1}) possesses no infinite
hierarchy of local conservation laws, whose existence is responsible for the
factorization mentioned above. Nonetheless, similar to the situation
happened in the work \ \cite{POS}, if one to succeed to state that the
Burgers type dynamical system \ (\ref{B1.1}) does possess another infinite
hierarchy of \textit{nonlocal} conservation laws, then some degree of the
found before symmetry recursion operator \ (\ref{B3.38}) will be already
factorized by means of the respectively constructed Poissonian structures.
Yet, by now, this problem remains still open.

\section{Conclusion}

Having based on the differential-algebraic approach \cite%
{BPS,PM,PAPP,PAPP1,POS} to testing the Lax type integrability of nonlinear
dynamical systems on functional manifolds, we stated that the two component
polynomial Burgers type dynamical system \ (\ref{B1.1}) does possess an
adjoint matrix Lax type representation and the corresponding recursion
operator, which does not allow a bi-Poissonian  factorization and makes it
possible to construct only two local conserved quantities. A problem to
construct a generalized bi-Poissonian  factorization of a suitably powered
recursion operator, similarly to that of the work \ \cite{POS}, is left for
the future analysis. Thus, the differential-algebraic approach, jointly with
considerations based on the symplectic geometry, can serve as simple enough
as effective tool for analyzing the Lax type integrability of a wide class
of polynomial nonlinear dynamical systems on functional manifolds. Moreover,
as it was recently demonstrated in \cite{PAPP2}, this approach also appears
to be useful in the case of nonlocal polynomial dynamical systems.

\bigskip

\section{Acknowledgements}

D.B. acknowledges the National Science Foundation (Grant CMMI-1029809), A.P.
cordially thanks Prof. J. Cie\'{s}li\'{n}skiemu (Bia\l ystok University,
Poland),  Prof. I. Mykytyuk (Pedagogical University of Krakow, Poland) and
Prof. Prof. A. Augustynowicz (Gdansk University, Poland) for useful
discussions of the results obtained. A.P., E.O. and K.S.   gratefully
acknowledge partial support of the research in this paper from the
Turkey-Ukrainian: TUBITAK-NASU Grant 110T558. \

\end{document}